\newcommand{\mathsym}[1]{{}}
\newcommand{\unicode}[1]{{}}
\newcommand{\E}{{\mathbb{E}}}
\newcommand{\N}{{\mathbb{N}}}
\newcommand{\X}{{\mathcal{X}}}
\newcommand{\Y}{{\mathcal{Y}}}
\newcommand{\G}{{\mathscr{G}}}
\newcommand{\WW}{\Omega}
\newcommand{\PP}{{\mathcal{P}}}
\newcommand{\be}{\begin{equation}}
\newcommand{\bel}[1]{\begin{equation}\label{#1}}
\newcommand{\qe}{\end{equation}}
\newcommand{\ee}{\end{equation}}
\newcommand{\eeq}{\end{equation}}
\newcommand{\ba}{\begin{eqnarray}}
\newcommand{\ea}{\end{eqnarray}}
\newtheorem{theorem}{Theorem}
\newtheorem{corollary}[theorem]{Corollary}
\newtheorem{definition}{Definition}
\newtheorem{example}{Example}
\newtheorem{lemma}[theorem]{Lemma}
\newtheorem{proposition}[theorem]{Proposition}
\date{\today}                      % Activate to display a given date or no date
\begin{document}

\title{The free energy requirements of biological organisms; implications for evolution}

 \author{David H. Wolpert}
\altaffiliation{Massachusetts Insitutute of Technology}
 \altaffiliation{Arizona State University}
  \affiliation{Santa Fe Institute, 1399 Hyde Park Road, Santa Fe, NM 87501, USA\\
   \texttt{http://davidwolpert.weebly.com}}

% Simple summary
%\simplesumm{} %%MDPI internal note: new layout%%

% Abstract (Do not use inserted blank lines, i.e. \\) 
\begin{abstract}
Recent advances in nonequilibrium statistical physics have provided unprecedented 
insight into the thermodynamics of dynamic processes. The author recently used these advances to
extend Landauer's semi-formal reasoning concerning the thermodynamics of bit erasure, to derive the minimal free energy
required to implement an arbitrary computation. Here, I extend this analysis, deriving the minimal free energy
required by an organism to run a given (stochastic) map $\pi$ from its sensor inputs to its actuator outputs.
I use this result to calculate the input-output map $\pi$ of an organism that optimally trades off the free energy 
needed to run $\pi$ with the phenotypic fitness that results from implementing $\pi$.
I end with a general
discussion of the limits imposed on the rate of the terrestrial biosphere's information processing
by the flux of sunlight \mbox{on the Earth}.
\end{abstract}

\maketitle

% Keywords
%\keyword{thermodynamics of computation; Landauer bound; information processing rate of the biosphere}

%\keyword{Thermodynamics of computation; Landauer bound; Jensen-Shannon divergence;
%Turing machines; genetic transcription; computability of natural selection; information processing rate of biosphere.}

% The fields PACS, MSC, and JEL may be left empty or commented out if not applicable
%%\PACS{J0101}
%\MSC{}
%\JEL{}

% If this is an expanded version of a conference paper, please cite it here: enter the full citation of your conference paper, and add $^\S$ in the end of the title of this article.
%\conference{}

%%%%%%%%%%%%%%%%%%%%%%%%%%%%%%%%%%%%%%%%%%
% Only for the journal Data:

%\dataset{DOI number or link to the deposited data set in cases where the data set is published or set to be published separately. If the data set is submitted and will be published as a supplement to this paper in the journal Data, this field will be filled by the editors of the journal. In this case, please make sure to submit the data set as a supplement when entering your manuscript into our manuscript editorial system.}

%\datasetlicense{license under which the data set is made available (CC0, CC-BY, CC-BY-SA, CC-BY-NC, etc.)}

%%%%%%%%%%%%%%%%%%%%%%%%%%%%%%%%%%%%%%%%%%

%\begin{document}

\section{Introduction}

%\scalebox{0.975}[1.0]{
It is a truism that biological systems acquire and store information about their environments \cite{frank2009natural,frank2012natural,donaldson2010fitness,krakauer2011darwinian,taylor2007information}. 
However, they do not just store information; they also process that information. In other words, they perform computation.
The energetic consequences for biological systems of these three processes---acquiring, storing, and processing
information---are becoming the focus
of an increasing body of research~\cite{bullmore2012economy,sartori2014thermodynamic,mehta2012energetic,mehta2015landauer,laughlin2001energy,govern2014energy,govern2014optimal,lestas2010fundamental,england2013statistical,landenmark2015estimate}. 
In this paper, I further this research
by analyzing the energetic resources that an organism needs in order to compute in a \mbox{fitness-maximizing way.}

Ever since Landauer's seminal work~\cite{landauer1961irreversibility,landauer1996minimal,landauer1996physical,bennett1973logical,bennett1982thermodynamics,bennett1989time,bennett2003notes,maroney2009generalizing,plenio2001physics,shizume1995heat,fredkin2002conservative},
%a conclusion that is now being confirmed
%experimentally~\cite{dunkel2014thermodynamics,roldan2014universal,berut2012experimental,koski2014experimental,jun2014high}. 
it has been appreciated that the laws of statistical physics
impose lower bounds on how much thermodynamic work must be done on a system
in order for that system to undergo a two-to-one map, e.g., to undergo bit erasure. 
By conservation of energy, that work must ultimately be acquired from some external source 
(e.g., sunlight, carbohydrates, \textit{etc.}).
%that stored it as free energy 
%In other words, it is necessary to ``draw power from a battery'' to get the free energy that is
%needed to implement a 2-to-1 map on a system. 
If that work on the system is eventually converted into heat that is dumped into an external heat bath, then
the system acts as a heater. In the context of biology, this means that
whenever a biological system (deterministically) undergoes a two-to-one map, it must use free energy from 
an outside source to do so and produces heat as a result.

These early analyses led to a widespread belief that there must
be strictly positive lower bounds on how much free energy is required to implement {any} 
 deterministic, logically-irreversible computation. Indeed, Landauer wrote ``...logical irreversibility is associated with physical 
irreversibility and requires a minimal heat generation''~\cite{landauer1961irreversibility}.
In the context of biology, such bounds would translate to a 
lower limit on how much free energy a biological system must ``harvest'' from its environment in order to implement any
particular (deterministic) computation, not just bit erasure.

A related conclusion of these early analyses was that a one-to-two map, in which noise is added to a system that
is initially in one particular state with probability one,
can act as a {refrigerator} %  italics only used when indeed necessary.
, rather than a heater, {removing} heat from the environment~\cite{bennett1989time,bennett2003notes,landauer1961irreversibility,bennett1982thermodynamics}.
Formally, the minimal work that needs to be done on a system in order to make it undergo a one-to-two map is {negative}.
So for example, if the system is coupled to a battery that stores free energy,
a one-to-two map can ``power the battery'', by gaining free energy from a
heat bath rather than dumping it there. To understand this intuitively,
suppose we have a two-state system that is initially in one particular state with probability one. 
Therefore, the system initially has low entropy. That means we can connect it to a heat bath
and then have it do work on a battery (assuming the battery was initially at less than
maximum storage), thereby transferring energy from the heat bath into that battery. 
As it does this, though, the system gets thermalized, \textit{i.e.}, undergoes a one-to-two map
(as a concrete example, this is what happens in adiabatic demagnetization of an 
Ising spin system~\cite{landauer1961irreversibility}).

This possibility of gaining free energy by adding noise to a computation, or at least reducing the amount of
free energy the computation needs, means that there is a trade-off in biology: on the one hand, there is a benefit
to having biological computation that is as precise as possible, in order to maximize the behavioral fitness that results
from that computation; on the other hand, there is a benefit to having 
the computation be as imprecise as possible, in order to minimize the amount of free
energy needed to implement that computation. This tradeoff raises the intriguing possibility that some biological systems 
have noisy dynamics ``on purpose'', as a way to maintain high stores of free energy. For such a system, the noise would not be an
unavoidable difficulty to be overcome, but rather a resource to be exploited.

More recently, there has been dramatic progress in our understanding of
non-equilibrium statistical physics and its relation to information-processing~\cite{faist2012quantitative,touchette2004information, sagawa2009minimal,dillenschneider2010comment,sagawa2012fluctuation,crooks1999entropy,crooks1998nonequilibrium,chejne2013simple, jarzynski1997nonequilibrium,esposito2011second,esposito2010three,parrondo2015thermodynamics,pollard2014second,seifert2012stochastic,takara2010generalization,hasegawa2010generalization,prokopenko2015information}. 
Much of this recent literature 
%has considered a system with a binary coarse-grained space, $v \in \B$, and calculates
has analyzed the minimal work required
to drive a physical system's (fine-grained) microstate dynamics during the interval from $t=0$ to $t=1$
in such a way that the associated dynamics over some space of (coarse-grained) macrostates is given by some 
specified Markov kernel $\pi$.
%in such a way that the conditional distribution of the macrostate
%of the system at $t=1$, given the macrostate at $t=0$, is some desired $\pi$. 
In particular, there has been detailed analysis of the minimal work needed 
when there are only two macrostates, $v =0$ and $v=1$, and we require that both get mapped
by $\pi$ to the macrostate $v=0$~\cite{esposito2011second,sagawa2014thermodynamic,parrondo2015thermodynamics}. By identifying the 
macrostates $v \in V$ as Information Bearing Degrees of Freedom (IBDF)~\cite{bennett2003notes} 
of an information-processing device like a digital computer, these
analyses can be seen as elaborations of the analyses of Landauer \textit{et al}. on the thermodynamics of bit erasure.
% on the thermodynamics of
%information processing. In particular, a program run by a digital computer is a (potentially stochastic)
%map over its coarse-grained states, implemented by a thermodynamic engine.
Recently, these analyses of maps over binary spaces $V$ have been applied to 
explicitly biological systems,
at least for the special case of a periodic forcing function~\cite{england2013statistical}.

These analyses have resulted in substantial clarifications of Landauer's semiformal reasoning, arguably overturning it in
some regards. For example, this analysis has shown that the logical (ir)reversibility of $\pi$ has nothing to
do with the thermodynamic (ir)reversibility of a system that implements $\pi$. In particular, it is
possible to implement bit erasure (which is logically irreversible) in a thermodynamically-reversible
manner. In the modern understanding, there is no irreversible increase of
entropy in bit erasure. Instead, there is a minimal amount of thermodynamic work that needs
to be expended in a (thermodynamically reversible) implementation of bit erasure (see Example~\ref{ex:landauer} below.)

Many of these previous analyses consider processes for implementing $\pi$ that are tailored
for some specific input distribution over the macrostates, $P(v_t)$. Such processes
are designed to be thermodynamically reversible {when run on $P(v_t)$. However, when run on
a distribution other than $P(v_t)$, they are thermodynamically irreversible,} resulting in
wasted (dissipated) work. For example, in~\cite{mandal2012work}, the amount of work required to implement $\pi$ depends
on an assumption for $\epsilon$, the probability of a one in a randomly-chosen position
on the bit string.

In addition, important as they are, these recent analyses are not applicable to arbitrary maps $\pi$
over a system's macrostates. For
example, as discussed in~\cite{wolpert_landauer_2016a}, the ``quench-based'' devices analyzed 
in~\cite{esposito2011second,sagawa2014thermodynamic,parrondo2015thermodynamics} can
only implement maps whose output is independent of its input (as an example, the output of bit
erasure, an erased bit, is independent of the original state of the bit). 

Similarly, the devices considered in~\cite{mandal2012work,barato2014stochastic}
combine a ``tape'' containing a string of bits with
a ``tape head'' that is positioned above one of the bits on the tape. In each iteration of the system,
the bit currently under the tape head undergoes an arbitrary map to produce a new bit value, and then, the tape
is advanced so that the system is above the next bit. 
Suppose that, inspired by~\cite{deffner2013information}, we 
identify the state of the IBDF of the overall tape-based system as the entire bit string, aligned
so that the current tape position of the read/write subsystem is above Bit {zero}%is this a whole number? 
% if a whole number, please write out as zero; if a label, capitalize Bit
%
% ??? It is a whole number, and written out. 
. In
other words, we would identify each state of the IBDF as an aligned big string 
$\{v_i : i = \ldots, -1, 0, ... N\}$ where $N$ is the number of bits that
have already been processed, and the (negative) minimal index could either be finite or
infinite (note that unless we specify which bit of the string is the current one, \textit{i.e.},
which has index {zero},
% see the previous note
the update map over the string is not defined).

This tape-based system is severely restricted in the set of computations it can implement on its IBDF. 
%For example, let
%$v$ and $v' \ne v$ be two separate aligned bit-strings. 
For~example, because the tape can only move forward, the
system cannot deterministically map an IBDF state
$v = \{\ldots v_{-1}, v_0, v_1, \ldots, v_N\}$ to an IBDF state
$v' = \{\ldots v'_{-1}, v'_0, v'_1, \ldots, v'_{N-1}\}$. (In~\cite{barato2013autonomous},
the tape can rewind. However, such rewinding only arises due to thermal fluctuations
and therefore does not overcome the problem.)

%Crucially, the state of the bit string on the tape is explicitly identified
%as the IBDF of the system. 
%%Since the map applied to the current bit is arbitrary, 
%%these systems are able to implement maps over the IBDF where the output does indeed vary with the input.
%%However because 
%As a result of this and the fact that the tape cannot be rewound, 
%but can only move forward, these systems cannot implement arbitrary maps over the IBDF.
%For example, these systems cannot implement a map that in each iteration would
%send $v \rightarrow v'$ but would also send $v' \rightarrow v$,
%where $v$ and $v' \ne v$ are two separate states of the IBDF (i.e., two separate bit strings). 

It should be possible to extend either the quench-based devices reviewed in~\cite{parrondo2015thermodynamics}
and the tape-based device introduced in~\cite{mandal2012work} into a system
that could perform arbitrary computation. In fact, in~\cite{wolpert_landauer_2016a},
I showed how to extend quench-based devices into systems that could perform
arbitrary computation in a purely thermodynamically-reversible manner. This allowed me to calculate the
minimal work that any system needs to implement {any} given conditional distribution $\pi$. To be 
precise, I showed how for any $\pi$ and initial distribution $P(v_t)$, one \mbox{could construct}:

\begin{itemize}%[leftmargin=*,labelsep=4mm]
\item a physical system $\mathcal{S}$;
\item a process $\Lambda$ running over $\mathcal{S}$; 
\item an associated coarse-grained set $V$ giving the macrostates of $\mathcal{S}$; 
\end{itemize}
such that: 
\begin{itemize}%[leftmargin=*,labelsep=4mm]
\item running $\Lambda$ on $\mathcal{S}$ ensures that the distribution across $V$ changes according
to $\pi$, even if the initial distribution differs from $P(v_t)$;
\item $\Lambda$ is thermodynamically reversible if applied to $P(v_t)$.
\end{itemize}

By the second law, no process can implement $\pi$ on $P(v_t)$ with less work than $\Lambda$ requires. 
Therefore, by calculating the amount of work required by $\Lambda$, we calculate a lower
bound on how much work is required to run $\pi$ on $P(v_t)$. In the context of biological
systems, that bound is the minimal amount of free energy that any organism 
must extract from its external environment in order to run $\pi$.

However, just like in the systems considered previously in the literature, this $\Lambda$ is thermodynamically optimized for
that initial distribution $P(v_t)$. It would be thermodynamically irreversible (and therefore
dissipate work) if used for any other other initial distribution. In the context of biological
systems, this means that while natural selection may produce an information-processing organism that
is thermodynamically optimal in one environment, it cannot produce one that is thermodynamically 
optimal in all environments.

%\dhwc{In all of these analyses, including that in~\cite{wolpert_landauer_2016a}, 
%the physical systems that implement $\pi(v' \mid v)$ are optimized for one specific initial distribution
%over $v$. They dissipate work non-zero dissipation if they are used for different
%distributions. And that the same is true for the contribution I made.}
%
%\dhwc{Discuss briefly how in this paper I extend my earlier analysis, to ``input-output systems that map sensor
%readings from their environment into associated actions''.}
%
%\dhwc{Briefly mention the impact of having $\G_0 \ne \PP_1$, before restricting attention to the
%case of $\G_0 = \PP_1$.}
%
%\dhwc{Do something with the text up to the asterisks:}

Biological systems are not only information-processing systems, however. As mentioned above,
they also acquire information from their environment and store it. Many of these processes
have nonzero minimal thermodynamic costs, \textit{i.e.}, the
system must acquire some minimal free energy to implement them.
In addition, biological systems often rearrange matter, thereby changing its entropy. Sometimes, these
systems benefit by decreasing entropy, but sometimes, they benefit by increasing entropy, 
e.g., as when cells use depletion forces, when 
they exploit osmotic pressures, \textit{etc}. This is another contribution to their free energy requirements. 
Of~course, biological systems also typically perform
physical ``labor'', \textit{i.e.}, change the expected energy of various systems, by breaking/making chemical bonds, 
and on a larger scale, moving objects (including themselves), developing, growing, \textit{etc}. They must harvest free energy 
from their environment to power this labor, as well. 
Some biological processes even involve several of these 
phenomena simultaneously, e.g., a biochemical pathway that processes information from the environment, 
making and breaking chemical bonds as it does so and also changing its overall entropy.
%
%\dhwc{The first and second of those phenomena are actually specific interpretations of the formula for
%minimal free energy for the case where expected energy conditioned on a macrostate
%is independent of that macrostate. The second phenomenon instead arises when those
%expected energies differ, due to dynamics over the space of macrostates.}

%In~\cite{wolpert_landauer_2016a}
%I required that the Hamiltonian at the end of an biological
%process to be specially related to the Hamiltonian at the beginning of
%that process. This allowed me to ignore the contribution to free energy cost involving changes
%to the expected value of the Hamiltonian. In this paper I weaken that restriction, allowing
%the Hamiltonians at both the beginning and end of an iteration of $\pi$ to be arbitrary.
%This means the analysis will capture the thermodynamic effect of labor by the organism. 
%
%In addition, 

In this paper, I analyze some of these contributions to the free energy requirements of biological
systems and the implications of those costs for natural selection. The precise contributions of this paper are:
\begin{enumerate}%[leftmargin=*,labelsep=3mm]
\item Motivated by the example of a digital computer, 
the analysis in~\cite{wolpert_landauer_2016a} was formulated for systems that change 
the value $v$ of a single set of physical variables, $V$. Therefore, for example,
as formulated there, bit erasure means a map that sends both $v_t = 0$ and $v_t=1$ to $v_{t+1}=0$.

Here, I instead formulate the analysis for biological ``input-output''
systems that implement an arbitrary stochastic map
taking one set of ``input'' physical variables $X$, representing the state of a sensor,
to a separate set of ``output'' physical variables, $Y$, representing the action taken by the organism in
response to its sensor reading. 
Therefore, as formulated in this paper, ``bit erasure'' means a map $\pi$ that sends both $x_t = 0$ and $x_t=1$ to $y_{t+1}=0$.
My first contribution is to show how to implement any given stochastic map $X \rightarrow Y$ 
with a process that requires minimal work if it is applied to some specified distribution over $X$ and to calculate
that minimal work.
\item 
In light of the free energy costs associated with implementing a map $\pi$, what $\pi$ would
we expect to be favored by natural selection? In particular,
recall that adding noise to a computation can result in a reduction in how much work is needed to implement it. Indeed,
by using a sufficiently noisy $\pi$, an organism can {increase} its stored free energy
(if it started in a state with less than maximal entropy).
Therefore, noise might not just be a hindrance that an organism needs to circumvent; an organism
may actually exploit noise, to ``recharge its battery''.
This implies that an organism will want to implement a ``behavior'' $\pi$ that is noisy as possible. 

In addition, not all terms in a map $x_t \rightarrow y_{t+1}$ are equally important to an organism's reproductive
fitness. It will be important to be very precise in what output is produced for some
inputs $x_t$, but for other inputs, precision is not so important. Indeed, for some
inputs, it may not matter at all what output the organism produces in response.
In light of this, natural selection would be expected to favor organisms that
implement behaviors $\pi$ that are as noisy
as possible (thereby saving on the amount of free energy the organism needs to
acquire from its environment to implement that behavior), 
while still being precise for those inputs where behavioral fitness requires it.
I write down the equations for what $\pi$ optimizes this tradeoff and show
that it is approximated by a Boltzmann distribution over a sum of behavioral fitness and energy.
I then use that Boltzmann distribution to calculate a lower bound
on the maximal reproductive fitness over all possible behaviors $\pi$.

\item My last contribution is to use the preceding results to relate the free energy flux incident
on the entire biosphere to the maximal ``rate of computation'' implemented by 
the biosphere. This relation gives an upper bound on the rate of
computation that humanity as a whole can ever achieve, if it restricts itself
to the surface of Earth.
\end{enumerate}

In Section~\ref{sec:prelims}, I first review some of the basic quantities considered in
nonequilibrium statistical physics and then review some of the relevant recent work
in nonequilibrium statistical physics (involving ``quenching processes'') related to the free energy cost of computation. I then
discuss the limitations in what kind of computations that recent work can be used to
analyze. I end by presenting an extension to that recent work that does not
have these limitations (involving ``guided quenching
% pelase check this definition
%
% It is fine.
 processes''). In Section~\ref{sec:organisms}, I use this extension to 
calculate the minimal free energy cost of any given input-output ``organism''. I end
this section by analyzing a toy model of the role that this free energy cost would
play in natural selection. Those interested mainly in these biological implications
can skip Section~\ref{sec:prelims} and should still be able to follow the thrust
of the analysis.

In this paper I extend the construction reviewed in~\cite{parrondo2015thermodynamics}
to show how to construct a system to perform any given computation in a thermodynamically reversible manner. (It seems likely that the tape-based system introduced in~\cite{mandal2012work} could also be extended to do this.)

\section{Formal Preliminaries}\unskip
\label{sec:prelims}

\subsection{General Notation}
\label{sec:gen_not}

I write $|X|$ for the cardinality of any countable space $X$. I will write the
Kronecker delta between any two elements $x, x' \in X$ as $\delta(x, x')$. 
For any logical condition $\zeta$, $I(\zeta) = 1$ (0, {respectively}) % please spell it out
if $\zeta$ is true (false, {respectively}). When referring
generically to any probability distribution, I will write ``$Pr$''. Given any 
distribution $p$ defined over some space $X$,
I write the Shannon entropy for countable $X$, measured in nats, as:
\ba
S_p(X) &=& -\sum_{x\in X} p(x) \ln\bigg [p(x) \bigg] 
\ea
%which is the same as the Gibbs entropy. Except where stated otherwise (in particular in the analysis of Turing machines below),
%here I will assume a uniform default prior.

As shorthand, I sometimes write $S_p(X)$ as $S(p)$ or even just $S(X)$ when $p$ is implicit. 
I use similar notation for conditional entropy, joint entropy of more than one random variable, \textit{etc}. 
I also write mutual information between two random variables $X$ and $Y$ in the usual way, as $I(X; Y)$~\cite{mack03,coth91,yeung2012first}.

Given a distribution $q(x)$ and a conditional distribution $\pi(x' \mid x)$,
I will use matrix notation to define the distribution $\pi q$:
\ba
[\pi q](x') &=& \sum_{x} {\pi}(x' \mid x) q(x)
\ea

For any function $F(x)$ and distribution $P(x)$, I write:
\ba
\E_P(F) &=& \sum_x F(x) P(x)
\ea

I will also sometimes use capital letters to indicate variables that are marginalized
over, e.g., writing:
\ba
\E_P(F(X, y)) &=& \sum_x P(x) F(x, y)
\ea

Below, I often refer to a process as ``semi-static''. This means that these processes transform one Hamiltonian
into another one so slowly that the associated distribution is
always close to equilibrium, and as a result, only infinitesimal amounts of dissipation
occur during the entire process. For this assumption to be valid, the implicit units of time 
in the analysis below must be sufficiently long on the timescale of the 
relaxation processes of the physical systems involved (or equivalently, those relaxation
processes must be sufficiently quick when measured in those time units).

If a system with states $x$ is subject to a Hamiltonian $H(x)$, then the
associated equilibrium free energy is:
\ba
F_{eq}(H) &\equiv& -\beta^{-1} \ln[Z_H(\beta)]
\ea
where as usual $\beta \equiv 1 / k T$, and the partition function is:
\ba
Z_H(\beta) &=& \sum_x \exp{-\beta H(x)}
\ea

However, the analysis below focuses on nonequilibrium distributions $p(x)$, for
which the more directly relevant quantity is the {nonequilibrium} free energy,
in which the distribution need not be a Boltzmann distribution for the current Hamiltonian:
\ba
F_{neq}(H, p) &\equiv& \E_p(X) - kTS(p) \nonumber \\
 &=& \sum_x p(x) H(x) + kT\sum_x p(x) \ln[p(x)]
\ea
where $k$ is Boltzmann's constant.
For fixed $H$ and $T$, $F_{neq}(H, p)$ is minimized by the associated Boltzmann distribution $p$, for
which it has the value $F_{eq}(H)$. It will be useful below to consider the changes in nonequilibrium free
energy that accompany a change from a distribution $P$ to a distribution $M$ accompanied by
a change from a Hamiltonian $H$ to a Hamiltonian $H'$:
\ba
\Delta F_{neq}^{H,H'}(P, M) &\equiv& F_{neq}(H', M) - F_{neq}(H, P)
\label{eq:delta_noneq_free_def}
\ea

\subsection{Thermodynamically-Optimal Processes}
\label{sec:2.2}

If a process $\Lambda$ maps a distribution $P$ to a distribution $M$ thermodynamically reversibly,
then the amount of work it uses when applied to $P$ is 
$\Delta F_{neq}^{H,H'}(P, M)$~\cite{parrondo2015thermodynamics,reif65,still2012thermodynamics,deffner2013information}. In particular,
$\Delta F_{neq}^{H,H'}(P, \pi P)$ is the amount of work used by a thermodynamically-reversible
process $\Lambda$ that maps a distribution $P$ to $\pi P$.
Equivalently, it is negative for the amount of work that is extracted by $\Lambda$ when
transforming $P$ to $\pi P$. 

In addition, by the second law, there is no process that
maps $P$ to $M$ while requiring less work than a thermodynamically-reversible process that maps $P$ to $M$.
This motivates the following~definition.

\begin{definition}
Suppose a system undergoes a process $\Lambda$ that starts with Hamiltonian 
$H$ and ends with Hamiltonian $H'$. Suppose as well that: 
\begin{enumerate}%[leftmargin=*,labelsep=3mm]
\item at both the start and finish of $\Lambda$,
the system is in contact with a (single) heat bath at temperature~$T$;
\item $\Lambda$ transforms any starting distribution $P$ to an ending distribution $\pi P$, where neither of
those two distributions need be at equilibrium for their respective Hamiltonians;
\item $\Lambda$ is thermodynamically reversible when run on some particular starting distribution $P$.
\end{enumerate}
Then, $\Lambda$ is \textit{thermodynamically optimal} for the tuple $(P, \pi , H, H')$.
\end{definition} % use the boldface to express emphasis is not allowed, so we changed it to italic (the same to the other boldface)

\begin{example}
Suppose we run a process over a space $X \times Y$, transforming the $t=0$ 
distribution $q(x)M(y)$ to a $t=1$ distribution $p(x)M(y)$. Therefore, $x$ and $y$ are statistically
independent at both the beginning and the end of the process, and
while the distribution over $x$ undergoes a transition from $q \rightarrow p$, 
the distribution over $y$ undergoes a cyclic process, taking $M \rightarrow M$ (note that
it is {not} assumed that the ending and starting $y$'s are the same or that $x$ and $y$
are independent at times between $t=0$ and $t=1$).

Suppose further that at
both the beginning and end of the process, there is no interaction Hamiltonian, \mbox{\textit{i.e.}, at those two times}:
\ba
H(x, y) &=& H^X(x) + H^Y(y)
\ea

Then, no matter how $x$ and $y$ are coupled during the process, no matter how smart
the designer of the process, the process will require work of at least:
\ba
\Delta F_{neq}^{H,H}(q, p) &=& \bigg(E_p(H^X) - E_q(H^X)\bigg) - kT \bigg(S(p) - S(q) \bigg)
\label{eq:lower_bound_example}
\ea

Note that this amount of work is independent of $M$.
\end{example}

As a cautionary note, the work expended by any process operating on
any initial distribution $p(x)$ is the average of the work expended on each
$x$. However, the associated change in nonequilibrium free energy is not the average of the change in nonequilibrium
free energy for each $x$. This is illustrated in the following example.

\begin{example}
%Suppose that the Hamiltonian at the beginning and end
%of a process is $H$. Then the expected work to send some distribution
Suppose we have a process $\Lambda$ that sends each initial $x$ to an associated
final distribution $\pi(x' \mid x)$, while transforming the initial Hamiltonian $H$ into the final Hamiltonian $H'$. 
Write $W^\Lambda_{H, H',\pi}(x)$ for the work expended by $\Lambda$ 
when it operates on the initial state $x$. Then, the work expended by $\Lambda$ 
operating on an initial distribution $p(x)$ is $\sum_x p(x) W^\Lambda_{H, H',\pi}(x)$. In particular,
choose the process $\Lambda$, so that it sends $p \rightarrow \pi p$ with minimal work. Then:
\ba
\sum_x p(x) W^\Lambda_{H, H',\pi}(x) &=& \Delta F^{H',H}_{neq}(p, \pi p)
\ea
% the average over $x$ of the amount
%of work to send each $x \in X$ to the associated distribution over $Y$, $\pi(y \mid x)$. 

However, 
% the work to send $(p, H)$ to $(q, H')$ 
this does \emph{not} equal the average over $x$ of the 
associated changes to nonequilibrium free \mbox{energy, \textit{i.e.},}
\ba
 \Delta F^{H',H}_{neq}(p, \pi p) &=& F_{neq}(H', \pi p) - F_{neq}(H, p) \nonumber \\
 &\neq& \sum_x p(x) \bigg[F_{neq}(H', \pi(Y \mid x)) - F_{neq}(H, \delta(X,x)) \bigg] \nonumber \\
\ea
(where $\delta(X,x)$ is the distribution over $X$ that is a delta function at $x$).
The reason is that the entropy terms in those two nonequilibrium free energies
are not linear; in general, for any probability distribution $Pr(x)$,

\ba
\sum_x Pr(x) \ln[Pr(x)] &\ne& \sum_x Pr(x) \sum_{x'} \delta(x', x) \log[\delta(x', x)]
\ea
\end{example}

I now summarize what will be presented in the rest of this section.

Previous work showed how to construct a thermodynamically-optimal process
for many tuples $(p, \pi , H, H')$. 
%For any such tuple,
%by the second law, $\Delta F_{neq}^{H,H'}(p, \pi p)$ is a lower bound
%on the amount of work required to transform $p$ to $\pi p$ and $H$ to 
%$H'$ using \emph{any} process, be it $\Lambda$ or some other one. 
In particular, as discussed in the Introduction,
it is known how to construct a thermodynamically-optimal process for
any tuple $(p, \pi , H, H')$ where $\pi(x'\mid x)$ is independent of $x$, like
bit erasure. Accordingly, we know the minimal work necessary 
to run any such tuple. In Section~\ref{sec:q_proc_def}, I review
this previous analysis and show how to apply it to the kinds of input-output systems considered
in this paper.

However, as discussed in the Introduction, until recently, it was not known whether
one could construct a thermodynamically-optimal process for {any} tuple
$(p, \pi , H, H')$. In particular, given an arbitrary pair of an initial distribution $p$
and conditional distribution $\pi$, it was not known whether there is a process
$\Lambda$ that is thermodynamically optimal for $(p, \pi , H, H')$ for some
$H$ and $H'$. This means that it was not known what the minimal
needed work is to apply an arbitrary stochastic map $\pi$ to
an arbitrary initial distribution $p$. In particular, it was not known
if we could use the difference in nonequilibrium free energy between $p$ and $\pi p$
to calculate the minimal work needed to apply a computation $\pi$ to an
initial distribution $p$.

This shortcoming was overcome in~\cite{wolpert_landauer_2016a}, where it was explicitly shown how
to construct a thermodynamically-optimal process for any tuple
$(p, \pi , H, H')$. In Section~\ref{sec:gq_proc_def}, I show in detail how to
construct such processes for any input-output system.

Section~\ref{sec:gq_proc_def} also discusses the fact that a process that
is thermodynamically optimal for $(p, \pi , H, H')$ need {not} be thermodynamically
optimal for $(p', \pi , H, H')$ if $p' \ne p$. Intuitively, if we construct a process
$\Lambda$ that results in minimal required work for initial distribution $p$ and
conditional distribution $\pi$, but then apply that machine to a different distribution
$p' \ne p$, then in general, work is dissipated. While that $\Lambda$ is thermodynamically
reversible when applied to $p$, in general, it is not thermodynamically
reversible when applied to $p' \ne p$. As an example, if we design a computer
to be thermodynamically reversible for input distribution $p$, but
then use it with a different distribution of inputs, then work is dissipated.

In a biological context, this means that if an organism is ``designed'' not to dissipate any work when
it operates in an environment that produces inputs according to some $p$, but
instead finds itself operating in an environment that produces inputs according to
some $p' \ne p$, then it will dissipate extra work. That dissipated work is wasted since it
does not change $\pi$, \textit{i.e.}, has no consequences for the input-output map
that the organism implements. However, by the conservation of energy, that
dissipated work must still be acquired from some external source. This means that
the organism will need to harvest free energy from its environment at a higher rate \mbox{(to 
supply that dissipated work)} than would an organism that were ``designed''
for $p'$.

\subsection{Quenching Processes}
\label{sec:q_proc_def}

A special kind of process, often used in the literature, can be used to transform any given initial nonequilibrium distribution
into another given nonequilibrium distribution in a thermodynamically-reversible manner. These processes
begin by quenching the Hamiltonian of a system. After that, the Hamiltonian
is isothermally and quasi-statically changed, with the system in continual contact with a heat bath at a fixed
temperature $T$. The process ends by applying a reverse quench to return to the original 
Hamiltonian \mbox{(see~\cite{parrondo2015thermodynamics,esposito2011second,sagawa2014thermodynamic}}
for discussion of these kinds of processes).

More precisely, such a \emph{Quenching (Q)
 process} applied to a system
with microstates $r \in R$ is defined by:
\begin{enumerate}%[leftmargin=*,labelsep=3mm]
\item an \textit{initial}/\textit{final} Hamiltonian $H^t_{sys}(r)$;
\item an \textit{initial} distribution $\rho^t(r)$;
\item a \textit{final} distribution $\rho^{t+1}(r)$;
\end{enumerate}
and involves the following three steps:

\begin{enumerate}%[leftmargin=*,labelsep=3mm]
%\noindent \emph{II --- } 
\item[(i)] To begin, the system has Hamiltonian $H^t_{sys}(r)$, which is quenched
into a first \textit{quenching Hamiltonian}: 
\ba
H^t_{quench}(r) &\equiv& -kT \ln[\rho^t(r)]
\label{eq:first_quench}
\ea
In other words, the Hamiltonian is changed from $H^t_{sys}$ to
$H^t_{quench}$ too quickly for the distribution over $r$ to change from $\rho^t(r)$.

%
%chosen such that $q^{m}_{in}$ is an equilibrium distribution for that Hamiltonian. 
%We can write this quenching Hamiltonian as
%\ba
%H^{m}_{in}(w) &\equiv& -kT \ln[q^{v_0}_{in}(w)]
%\label{eq:quenching_ham_1}
%\ea
%Note that at this point $m = v_0$, so the starting distribution over the microstates
%of the processor, $q^{m}_{in}$, is an equilibrium distribution for $H^{m}_{in}$.

Because the quench is effectively instantaneous, it is thermodynamically
reversible and is adiabatic, involving no heat transfer between the system and the
heat bath. On the other hand, while $r$ is unchanged in a quench and, therefore, so is the distribution over $R$,
in general, work is required if $H^t_{quench} \ne H^t_{sys}$
(see~\cite{crooks1999entropy,crooks1998nonequilibrium,still2012thermodynamics,reif65}).

Note that {if} the Q process is applied to the distribution $\rho^t$, then at the end
of this first step, the distribution is at thermodynamic equilibrium. However, if the process is
applied to any other distribution, this will not be the case. In this situation, work is unavoidably dissipated in
in the next step.

\item[(ii)]
Next, we isothermally and quasi-statically transform $H^t_{quench}$ to a second quenching Hamiltonian, 
\ba
H^{t+1}_{quench}(r) &\equiv& -kT \ln[\rho^{t+1}(r)]
\label{eq:second_quench}
\ea
Physically, this means two things. First, that
a smooth sequence of Hamiltonians, starting with $H^t_{quench}$ and ending with
$H^{t+1}_{quench}$, is applied to the system. Second, that while that sequence is being
applied, the system is coupled with an external heat bath at temperature $T$, where the
relaxation timescales of that coupling are arbitrarily small on the time scale of the dynamics
of the Hamiltonian. This second requirement ensures that to first order, the system is always in thermal equilibrium for
the current Hamiltonian, assuming it started in equilibrium at the beginning of the step
(recall from Section~\ref{sec:gen_not} that I assume that quasi-static transformations occur in an arbitrarily small amount of time,
since the relaxation timescales are \mbox{arbitrarily short)}.
%\ba
%H^{m}_{out}(w) &\equiv& -kT \ln[q^{m}_{out}(w)] 
%\ea
%where 
%\ba
%q^{v_0}_{out}(w) &\equiv& \sum_{v_{1}} q^{v_{1}}_{in}(w) \pi(v_{1} \mid v_0)
%\ea
%Similarly to before, since $m = v_0$ at this stage, the distribution over $v_1$, 
%$q^{v_0}_{out}(w)$, is an equilibrium distribution for $H^{m}_{out}$.
%As in step (II), there is no change to $m$ in step (III). 
%(See also~\cite{sagawa2014thermodynamic}.)

%\noindent \emph{III ---} 
\item[(iii)] Next, we run a quench over $R$ ``in reverse'', instantaneously
replacing the Hamiltonian $H^{t+1}_{quench}(r)$ with the initial Hamiltonian $H^t_{sys}$,
%with
%the quenching Hamiltonian 
%\ba
%H^{m}_{out}(w) &\equiv& -kT \ln[q^{v_0}_{out}(w)] 
%\ea
%where $q^{v_0}_{out}(w) \equiv \sum_{v_{1}} q^{v_{1}}_{in}(w) \pi(v_{1} \mid v_0)$.
%This reverse QR begins by isothermally
%and quasi-statically sending $H_{sys}$ to $H^{m}_{out}$. After that $H^{m}_{out}$ is replaced by 
%$H_{sys}$ 
%in a ``reverse quench'', 
with no change to $r$. As in step (i), while work may be done (or extracted) in
step (iii), no heat is transferred.
\end{enumerate}

Note that we can specify any Q process in terms of its first and second quenching Hamiltonians rather than in terms
of the initial and final distributions, since there is a bijection between those two pairs. This central
role of the \emph{q}uenching Hamiltonians is the basis of the name ``Q'' process 
(I distinguish the distribution $\rho$ that defines a Q process, which is instantiated in the physical 
structure of a real system, from the actual distribution $P$ on which that physical system is run).

Both the first and third steps of any Q process are thermodynamically reversible, no matter
what distribution that process is applied to. In addition, 
if the Q process is applied to $\rho^t$, the second step will be thermodynamically reversible.
Therefore, as discussed 
in~\cite{parrondo2015thermodynamics,still2012thermodynamics,deffner2013information, esposito2011second},
if the Q process is applied to $\rho^t$, then the expected work expended by the process is given by the 
change in nonequilibrium free energy in going from $\rho^t(r)$ to $\rho^{t+1}(r)$,
%Here, this expected work reduces to
\ba
%%kT \sum_{v} \PP_1(v) [S(q^{v}_{in}) - S(q^{v}_{out})] &=& 
%\!\!\!\!\!\! kT \sum_{v_0, v_{1}} \PP_1(v_0) \pi(v_{1} \mid v_0) \ln [ \pi(v_{1} \mid v_0)] 
% &\equiv& 
%\label{eq:change_v_1}
&&\Delta F_{neq}^{H^t_{sys},H^t_{sys}}(\rho^t, \rho^{t+1}) \nonumber \\
&& \qquad = \;  \E_{\rho^{t+1}}(H^t_{sys}) - \E_{\rho^t}(H^t_{sys}) + kT\bigg[S(\rho^t) - S(\rho^{t+1})\bigg] 
%\sum_{v_0} \PP_1(v_0) \bigg[S(q^{v_0}_{in}(W)) - S(q^{v_0}_{out}(W)) \bigg] \\
% &=& -kT \bigg( S_{\PP}(V_{1} \mid V_0) + \sum_v S(q^{v}_{in}) \bigg[\PP_1(v) - \PP_1(v) \bigg] \bigg) \nonumber \\
\label{eq:change_v}
\ea

Note that because of how $H^t_{quench}$ and $H^{t+1}_{quench}$ are defined,
there is no change in the nonequilibrium free energy during the second step of the Q process
if it is applied to $\rho^t$:
\ba
\E_{\rho^{t+1}}(H^{t+1}_{quench}) - \E_{\rho^t}(H^t_{quench}) + kT\bigg[S(\rho^t) - S(\rho^{t+1})\bigg] &=& 0 \nonumber \\
%\sum_{v_0} \PP_1(v_0) \bigg[S(q^{v_0}_{in}(W)) - S(q^{v_0}_{out}(W)) \bigg] \\
% &=& -kT \bigg( S_{\PP}(V_{1} \mid V_0) + \sum_v S(q^{v}_{in}) \bigg[\PP_1(v) - \PP_1(v) \bigg] \bigg) \nonumber \\
\ea
All of the work arises in the first and third steps, involving the two quenches.

The relation between Q processes and information-processing of macrostates arises
once we specify a partition over $R$. I end this subsection with the following example of a Q process:

\begin{example}
\label{ex:landauer}
Suppose that $R$ is partitioned into two bins, \textit{i.e.}, there are two macrostates. For both $t = 0$ and $t=1$, for both partition
elements $v$, with abuse of notation, define:
\ba
P^t(v) &\equiv& \sum_{r \in v} \rho^t(r \mid v)
\ea
so that:
\ba
\rho^t(r) &=& \sum_v P^t(v) \rho^t(r \mid v)
\ea

Consider the case where $P^0(v)$ has full support, but $P^1(v) = \delta(v,0)$. Therefore,
the dynamics over the macrostates (bins) from $t=0$ to
$t=1$ sends both $v$'s to zero. In other words, it erases a bit. 

For pedagogical simplicity, take $H^0_{sys} = H^1_{sys}$ to be uniform.
Then, plugging in to~Equation (\ref{eq:change_v}), we see that the minimal work is:
\ba
kT[S(\rho^0) - S(\rho^1)] &=& kT\bigg[S(P^0) + \sum_{v} P^0(v) \bigg(-\sum_{r} P^0(r \mid v) \ln[\rho(r \mid v)] \bigg) \bigg] \nonumber \\
		&& \qquad \;-\; \{0 \rightarrow 1\} \nonumber \\
  &=& kT\bigg[S(P^0) + \sum_{v^0} P^0(v) S(R^0 \mid v^0)\bigg] \;-\; \{0 \rightarrow 1\} \nonumber \\
 &=& kT\bigg[S(P^0) + S(R^0 \mid V^0) - S(P^1) - S(R^1 \mid V^1) \bigg] \nonumber \\
 &=& kT\bigg[S(P^0) + S(R^0 \mid V^0) - S(R^1 \mid V^1) \bigg]  
\label{eq:generalized_landauer_bound} 
\ea
(the two terms $S(R^t \mid v^t)$ are sometimes called ``internal entropies'' in the literature~\cite{parrondo2015thermodynamics}).

In the special case that $P^0(v)$ is uniform and that $S(R^t \mid v^t)$ is the same for
both $t$ and both $v_t$, we recover Landauer's bound, $kT\ln(2)$, as the minimal amount
of work needed to erase the bit.
Note though that outside of that special case, Landauer's bound does not 
give the minimal amount of work needed to erase a bit. Moreover,
in all cases, the limit in~Equation (\ref{eq:generalized_landauer_bound}) is on the amount of work needed to erase the bit; 
a bit can be erased with zero dissipated work, \emph{pace} Landauer.
For this reason, the bound in~Equation (\ref{eq:generalized_landauer_bound}) is sometimes
called ``generalized Landauer cost'' in the literature~\cite{parrondo2015thermodynamics}.

On the other hand, suppose that we build a device to implement a Q process that achieves the bound
in~Equation (\ref{eq:generalized_landauer_bound}) for one particular initial distribution over the value of the bit, $\G_0(v)$.
Therefore, in particular, that device has ``built into it'' a first and second quenching Hamiltonian
given by:
%~\erf{eq:first_quench} and~\erf{eq:second_quench} respectively for 
\ba
H^0_{quench}(r) &=& -kT\ln[\G_0(r)] \\
H^1_{quench}(r) &=& -kT \ln[\G_1(r)]
\ea
respectively, where:
\ba
\G_0(r) &\equiv& \sum_v \G_0(v) \rho^0(r \mid v) \\
\G_1(r) &\equiv& \rho^1(r \mid v = 0)
\label{eq:G_1_def}
\ea

 If we then apply that device with a different initial macrostate distribution, 
$\PP_1(v) \ne \G_0(v)$, in general, work will be dissipated
in step (ii) of the Q process, because $\PP_1(r) = \sum_v \PP_1(v) \rho^0(r \mid v)$
will not be an equilibrium for $H^0_{quench}$. In the
context of biology, if a bit-erasing organism is optimized for one environment, but then used in
a different one, it will necessarily be inefficient, dissipating work (the minimal amount of work
dissipated is given by the drop in the value of
the Kullback--Leibler divergence between $\G_t$ and $\PP_t$ as the system develops from $t=0$ to $t=1$;
see~\cite{wolpert_landauer_2016a}).
\end{example}

\subsection{Guided Q Processes}
\label{sec:gq_proc_def}

Soon after the quasi-static transformation step of any Q process begins,
the system is thermally relaxed. Therefore, all information about $r_t$, the initial value
of the system's microstate, is quickly removed from the distribution over $r$ (phrased differently, 
that information has been transferred into inaccessible degrees of freedom in the external heat bath).
This means that the second quenching Hamiltonian cannot depend on the 
initial value of the system's microstate; after that thermal relaxation
of the system's microstate, there is no degree of freedom in the microstate that has
any information concerning the initial microstate. This means that after the
relaxation, there is no degree of freedom within the system undergoing the Q process that can modify the second 
quenching Hamiltonian based on the value of the initial microstate. 

As a result, \emph{by itself}, a Q process cannot change 
an initial distribution in a way that depends on that initial distribution. In particular,
it cannot map different initial macrostates to different final macrostates
(formally, a Q process cannot map a distribution with support restricted to the microstates in the macrostate
$v_t$ to one final distribution and map a distribution with support restricted to the macrostate
$v'_t \ne v_t$ to a different final distribution). 

On the other hand, both quenching Hamiltonians of a Q process running on a system $\mathcal{R}$ with microstates $r \in R$ {can}
depend on $s_t \in S$, the initial microstate of a different system, $\mathcal{S}$. Loosely speaking,
we can run a process over the joint system $\mathcal{R} \times \mathcal{S}$ that is thermodynamically reversible and
whose effect is to implement a different Q process over $R$, depending on the value $s_t$. 
In particular, we can ``coarse-grain'' such dependence on $s_t$:
given any partition over $\mathcal{S}$ whose elements are labeled by
$v \in V$, it is possible that both quenching Hamiltonians of a Q process running on $\mathcal{R}$
are determined by the macrostate $v_t$.

More precisely, 
a \emph{Guided Quenching (GQ) process} over $R$ guided by $V$ (for conditional distribution ${\overline{\pi}}$ and initial distribution $\rho^t(r, s)$)''
 % we add the definition of ``GQ'' in the above 
 %
 % I removed that definition above.
is defined by a quadruple:
\begin{enumerate}%[leftmargin=*,labelsep=3mm]
\item an \textit{initial}/\textit{final} Hamiltonian $H^t_{sys}(r, s)$;
\item an \textit{initial} joint distribution $\rho^t(r, s)$;
\item a time-independent partition of $S$ specifying an associated set of macrostates, $v \in V$;
\item a conditional distribution ${\overline{\pi}}(r \mid v)$.
\end{enumerate}
It is assumed that for any $s, s'$ where $s \in V(s')$,
\ba
\rho^t(r \mid s) &=& \rho^t(r \mid s')
\label{eq:s_dont_matter}
\ea
\textit{i.e.}, that the distribution over $r$ at the initial time $t$ can depend on the macrostate $v$, but not on
the specific microstate $s$ within the macrostate $v$. It is also assumed that there are boundary points in $S$
(``potential barriers'') separating the members of $V$ in that the system cannot physically move from $v$ to $v' \ne v$ without
going through such a boundary point. 

The associated GQ process involves the following steps:

\begin{enumerate}%[leftmargin=*,labelsep=3mm]
%\noindent \emph{II --- } 
\item[(i)] To begin, the system has Hamiltonian $H^t_{sys}(r, s)$, which is quenched
into a first {quenching Hamiltonian} written as:
\ba
H^t_{quench}(r, s) &\equiv& H^t_{quench;S}(s) + H^t_{quench; int}(r, s)
\ea
We take:
\ba
H^t_{quench; int}(r, s) &\equiv& -kT \ln[\rho^t(r \mid s)]
\ea
and for all $s$ except those at the boundaries of the partition elements defining the
macrostates $V$,
\ba
H^t_{quench;S}(s) &\equiv& -kT \ln[\rho^t(s)]
\label{eq:28}
\ea
However, at the $s$ lying on the boundaries of the partition elements defining $V$,
$H^t_{quench;S}(s)$ is arbitrarily large. Therefore,
there are infinite potential barriers separating the macrostates of $\mathcal{S}$. 

Note that away from those boundaries of the partition elements defining $V$,
$\rho^t(r, s)$ is the equilibrium distribution for $H^t_{quench}$.

\item[(iii)]
Next, we isothermally and quasi-statically transform $H^t_{quench}$ to a second quenching Hamiltonian, 
\ba
H^{t+1}_{quench;S}(r, s) &\equiv& H^t_{quench;S}(s) + H^{t+1}_{quench; int}(r, s)
\label{eq:29}
\ea
where:
\ba
H^t_{quench; int}(r, s) &\equiv& -kT \ln[{\overline{\pi}}(r \mid V(s))]
\ea
($V(s)$ being the partition element that contains $s$).

Note that the term in the Hamiltonian that only concerns $\mathcal{S}$ does not change in this step.
Therefore, the infinite potential barriers delineating partition boundaries in $S$ remain for the entire step.
I~assume that as a result of those barriers, the coupling of $\mathcal{S}$ with the heat bath
during this step cannot change the value of~$v$. As a result, even though the distribution over $r$ changes in this step, there is no change to the value of $v$.
%and the distribution within each partition element of $\mathcal{S}$ does not change. 
To describe this, I say that $v$ is ``semi-stable'' during this step. (To state this assumption more
formally, let $A(s', s'')$ be the (matrix) kernel
that specifies the rate at which $s' \rightarrow s''$ due to heat transfer between $\mathcal{S}$ and the heat bath 
during during this step (ii)~\cite{crooks1999entropy,crooks1998nonequilibrium}. 
Then, I assume that $A(s', s'')$ is arbitrarily small if $V(s'') \ne V(s')$.)
%and there is an infinite value of $H^t_{quench;S}$ 
%in all paths through $S$ that connect $s'$ and $s''$. 

As an example, the different
bit strings that can be stored in a flash drive all have the same expected energy, but the
energy barriers separating them ensure that the distribution over bit strings relaxes to the uniform
distribution infinitesimally slowly. Therefore, the value of the bit string is semi-stable.

Note that even though a semi-stable system is not at thermodynamic
equilibrium during its ``dynamics'' (in which its macrostate does not change), that dynamics is thermodynamically reversible,
in that we can run it backwards in time without requiring any work or resulting in
heat dissipation.

%Physically, this means several things. First, that the Hamiltonian 
%
%\dhwc{Cannibalize this:
%
%a smooth sequence of Hamiltonians, starting with $H^t_{quench}$ and ending with
%$H^{t+1}_{quench}$, is applied to the system. Second, that while that sequence is being
%applied, the system is coupled with an external heat bath at temperature $T$, where the
%relaxation timescales of that coupling are arbitrarily small on the time scale of the dynamics
%of the Hamiltonian. This second requirement ensures that to first order, the system is always in thermal equilibrium for
%the current Hamiltonian.
%
%Finally, in the analysis below, I will implicitly assume that whenever a subsystem is undergoing 
%the semi-static transformation step of a Q process it is 
%in contact with a (single) heat bath at temperature $T$. However I assume that all subsystems
%are isolated from the heat bath at all other times. Equivalently, I assume
%that all subsystems are always in contact with that heat bath, but also assume
%that the macrostate of any subsystem is semi-stable unless it is undergoing
%the semi-static transformation of a Q process. In other words, during a semi-static transformation
%a subsystem relaxes to thermal equilibrium arbitrarily quickly, but at all other times
%it relaxes to a Boltzmann distribution arbitrarily slowly. In practice, this means
%that except when it is undergoing a semi-static transformation, there are effectively infinite potential barriers separating
%the various macrostates of any subsystem

\item[(iii)] Next, we run a quench over $R \times S$ ``in reverse'', instantaneously
replacing the Hamiltonian $H^{t+1}_{quench}(r,s)$ with the initial Hamiltonian $H^t_{sys}(r,s)$,
%with
%the quenching Hamiltonian 
%\ba
%H^{m}_{out}(w) &\equiv& -kT \ln[q^{v_0}_{out}(w)] 
%\ea
%where $q^{v_0}_{out}(w) \equiv \sum_{v_{1}} q^{v_{1}}_{in}(w) \pi(v_{1} \mid v_0)$.
%This reverse QR begins by isothermally
%and quasi-statically sending $H_{sys}$ to $H^{m}_{out}$. After that $H^{m}_{out}$ is replaced by 
%$H_{sys}$ 
%in a ``reverse quench'', 
with no change to $r$ or $s$. As in step (i), while work may be done (or extracted) in
step (iii), no heat is transferred.
\end{enumerate}

%In particular, suppose that
%\ba
%H^t_{sys}(r, s) &=& H^t_{sys;R}(r) + H^t_{sys;S}(s)
%\ea
%(i.e., $H^t_{sys}$ has no interaction component coupling ${R}$ and $S$) 

%So the initial value of $r$ only cares about what partition element $v$ contains $s$, not the precise microstate $s$ 
%lying in that partition element.

There are two crucial features of GQ processes. The first is that a GQ process faithfully implements ${\overline{\pi}}$ even
if its output varies with its input and does so no matter what the initial distribution over $R \times S$ is.
The second is that for a {particular} initial distribution over $R \times S$, implicitly specified by $H^t_{quench}(r, s)$,
the GQ process is thermodynamically reversible.
 
The first of these features is formalized with the following result, proven in Appendix  A:
\begin{proposition}
A GQ process over $R$ guided by $V$ 
(for conditional distribution ${\overline{\pi}}$ and initial distribution $\rho^t(r, s)$) will transform
any initial distribution $p^t(v) p^t(r \mid v)$ into a distribution $p^t(v) {\overline{\pi}}(r \mid v)$
without changing the distribution over $s$ conditioned on $v$.
\label{prop:22}
\end{proposition}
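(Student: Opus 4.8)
The plan is to track the distribution over $(r,s)$ through the three steps of the GQ process, show that only the middle step alters it, and then pin down that alteration via the conditional equilibrium of the second quenching Hamiltonian. First I would observe that steps (i) and (iii) are both instantaneous quenches: each replaces one Hamiltonian by another faster than the microstate distribution can respond, so although they may exchange work they leave the distribution over $(r,s)$—and hence its $(r,v)$ marginal—completely unchanged. Consequently the entire transformation is effected by the quasi-static step (ii), and it suffices to analyze that step. Throughout I read the applied distribution as $p^t(v)\,p^t(r\mid v)\,\rho^t(s\mid v)$, i.e. with arbitrary $p^t(v)$ and $p^t(r\mid v)$ but with the $s$-conditional inherited from the defining distribution $\rho^t$, consistent with the standing assumption that $\rho^t(r\mid s)$ depends on $s$ only through $V(s)$.

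For step (ii) the key structural facts are that the $S$-only piece $H^t_{quench;S}(s)$ is held fixed and that infinite potential barriers separate the elements of $V$. By the semi-stability assumption—that the bath-induced kernel $A(s',s'')$ vanishes whenever $V(s'')\neq V(s')$—no probability can cross between macrostates, so the marginal over $v$ is conserved and stays $p^t(v)$. Within each macrostate, however, the coupling to the heat bath together with the quasi-static condition (relaxation arbitrarily fast relative to the Hamiltonian change) drives the system to the instantaneous conditional equilibrium, so that at the end of the step the conditional distribution on each $v$ is the Boltzmann distribution of $H^{t+1}_{quench}$ restricted to $v$. I would stress that this holds even when the applied distribution is not $\rho^t$: because the macrostate label is the only slow/conserved mode, relaxation inside a macrostate is complete and erases the initial $r$-conditional $p^t(r\mid v)$.

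I would then compute that conditional equilibrium explicitly. Away from the boundary points, $H^{t+1}_{quench}(r,s) = -kT\ln\rho^t(s) - kT\ln{\overline{\pi}}(r\mid V(s))$, so its Boltzmann factor restricted to the macrostate $v$ is proportional to $\rho^t(s)\,{\overline{\pi}}(r\mid v)$. Normalizing within $v$, and using that ${\overline{\pi}}(\cdot\mid v)$ sums to one while the $s$-weights within $v$ sum to $\rho^t(v)$, yields the product form $\rho^t(s\mid v)\,{\overline{\pi}}(r\mid v)$. Combining with the frozen weights $p^t(v)$, the distribution at the end of step (ii), and hence at the end of the whole process, is $p^t(v)\,{\overline{\pi}}(r\mid v)\,\rho^t(s\mid v)$. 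Marginalizing out $s$ gives $p^t(v)\,{\overline{\pi}}(r\mid v)$, the claimed output; and since the $s$-conditional is $\rho^t(s\mid v)$, which is exactly the equilibrium of the unchanged $H^t_{quench;S}$ within $v$ and therefore equals the initial $s$-conditional, the distribution over $s$ given $v$ is left invariant, as asserted.

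The step I expect to be the main obstacle is the second one: justifying that quasi-static relaxation reaches the conditional equilibrium of the final Hamiltonian irrespective of the initial $r$-conditional. This rests on treating the macrostate constraint as the sole conserved mode and assuming full ergodic relaxation inside each macrostate on the (arbitrarily fast) bath timescale, so that all memory of $p^t(r\mid v)$ is washed out while the $v$-weights are held fixed by the infinite barriers. Some care is also needed to confirm that the interaction term, though it may transiently correlate $r$ and $s$ during relaxation, leaves the final conditional in the product form above, and that the infinite-barrier boundary points of $S$ carry no probability weight and so do not disturb the normalization.
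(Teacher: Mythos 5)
Your proof is correct and follows essentially the same route as the paper's Appendix~A argument: both exploit the infinite barriers to freeze the $v$-weights during the quasi-static step (the paper fixes a sampled $v^*$ and averages at the end, you conserve the $v$-marginal directly), and both identify the end-of-relaxation state as the conditional Boltzmann equilibrium of $H^{t+1}_{quench}$ within each macrostate, normalizing $\rho^t(s)\,{\overline{\pi}}(r\mid v)$ on $V^{-1}(v)$ to $\rho^t(s\mid v)\,{\overline{\pi}}(r\mid v)$. Your explicit reading of the applied distribution as $p^t(v)\,p^t(r\mid v)\,\rho^t(s\mid v)$ is exactly the hypothesis of the paper's Lemma~1, so the two arguments coincide in substance.
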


%doesn't match the actual initial
%distribution over $R \times S$, i.e., even if 
%\ba
%p^t(r, s) &\not \propto& \exp{ -H^t_{quench; int}(r, s) / kT }
%\ea

Consider the special case where the GQ process is in fact applied to the initial distribution that defines it,
\ba
\rho^t(r, s) &=& \sum_v \rho^t(v) \rho^t(s \mid v) \rho^t(r \mid v)
\label{eq:before_pi}
\ea
(recall~Equation (\ref{eq:s_dont_matter})). In this case, the initial distribution is a Boltzmann distribution for the first quenching Hamiltonian; 
the final distribution is:
\ba
\rho^{t+1}(r, s) &=& \sum_v \rho^t(v) \rho^t(s \mid v) {\overline{\pi}}(r \mid v)
\label{eq:after_pi}
\ea
and the entire GQ process is thermodynamically reversible. This establishes the second crucial feature of \mbox{GQ processes.}

Plugging in, in this special case, the change in nonequilibrium free energy is:
\ba
&& \Delta F_{neq}^{H^t_{sys}, H^t_{sys}} (\rho^t, \rho^{t+1}) \nonumber \\
&& \qquad	 = \; \bigg[\sum_{r,s,v} \rho^t(v) \rho^t(s \mid v) \big({\overline{\pi}}(r \mid v) \nonumber \\
&& \qquad \qquad - \rho^t(r \mid v)\big) H^t_{sys}(r, s) \bigg]
		 - kT\bigg[S(\rho^{t+1}) - S(\rho^t)\bigg]
\label{eq:gq_work_required}		 
\ea	

This is the minimal amount of free energy needed to implement the GQ process.	 
%with $\rho^{t}$ and $\rho^{t+1}$ given by~\erf{eq:before_pi} and~\erf{eq:after_pi}, respectively.
An important example of such a thermodynamically-optimal GQ process is the work-free 
copy process discussed in~\cite{parrondo2015thermodynamics} and the \mbox{references therein.}

Suppose that we build a device to implement a GQ process
over $R$ guided by $V$ for conditional distribution ${\overline{\pi}}$ and initial distribution: 
\ba
\rho^t(r, s) &=& \sum_v \rho^t(r \mid v)\rho^t(s \mid v)\G_t(v)
\ea

Therefore, that device has ``built into it'' first and second quenching Hamiltonians
that depend on \linebreak $\rho^t(r \mid v), \rho^t(s \mid v)$ and $\G_t$.
%given by
%%~\erf{eq:first_quench} and~\erf{eq:second_quench} respectively for 
%\ba
%H^0_{quench}(r) &=& -kT\ln[\G_0(r)] \\
%H^1_{quench}(r) &=& -kT \ln[\G_1(r)]
%\ea
%respectively, where
%\ba
%\G_0(r) &\equiv& \sum_v \G_0(v) P^0(r \mid v) \\
%\G_1(r) &\equiv& P^1(r \mid v = 0)
%\label{eq:G_1_def}
%\ea
Suppose we apply that device in a situation where the initial distribution over $r$ conditioned on $v$
is in fact $\rho^t(r \mid v)$ and the initial distribution over $s$ conditioned on $v$ is
in fact $\rho^t(s \mid v)$, but the initial macrostate distribution, 
$P_t(v)$, does not equal $\G_t(v)$. In this situation, the actual initial distribution
at the start of step (ii) of the GQ process will not be an equilibrium for the initial quenching Hamiltonian. 
However, this will not result in there being any work dissipated during the thermal relaxation of
that step. That is because the 
distribution over $v$ in that step does not relax, no matter what it is initially
(due to the infinite potential barriers in $S$), while the initial distribution over
$(r, s)$ conditioned on $v$ {is} in thermal equilibrium for the initial quenching Hamiltonian.

However, now suppose that we apply the device in a situation where the initial distribution over $r$ conditioned on $v$
does not equal $\rho^t(r \mid v)$. In this situation, work {will} be dissipated in step (ii) of the GQ process. That is
because the initial distribution over $r$ when the relaxation starts is not in
thermal equilibrium for the initial quenching Hamiltonian, and this distribution {does} relax in step (ii).
%work will be dissipated
%in step (II) of the GQ process, because the actual initial distribution
%will not be an equilibrium for the initial quenching Hamiltonian. 
Therefore, if the device was not ``designed'' for the actual initial distribution over $r$ conditioned on $v$
(\textit{i.e.}, does not use a $\rho^t(r \mid v)$ that equals that actual distribution), it will necessarily
dissipate~work. 

As elaborated below, this means that if a biological organism 
%As an example
%from biology, if an organism 
that implements {any} map ${\overline{\pi}}$ is optimized for one environment, \textit{i.e.}, one distribution over its inputs,
but then used in an environment with a different distribution over its inputs, 
it will necessarily be inefficient, dissipating work (recall that above, we established
a similar result for the specific type of Q process that can be used to erase~a~bit).

%To do this we simply run a special kind of Q process over the joint system $R \times \mathcal{S}$. 
%Without loss of generality write the initial joint distribution as
%\ba
%\rho^0(r_0, s) &\equiv& \rho^0(s) \rho^0(r \mid s) \\
%\ea
%In addition, we require that the final joint distribution has the form
%\ba
%\rho^1(r, s) &\equiv& \delta(s, s_0) \rho^1(r \mid s) \\
%\ea
%Define the 
%where $\rho^0(r)$ is the marginal over 
%The first quenching Hamiltonian of this joint Q process is
%\ba
%H^0_{joint \; quench}(r, s) &\equiv& -kT \ln[\rho^0(r, s)]
%\ea
%see how to do this, let $H_{joint}(r, s)$ be the initial / final Hamiltonian of the joint system,
%
%simply requires that at
%each time $t$ while we are quasi-statically changing the there is an appropriate interaction Hamiltonian coupling 
%$s_t$ and $r_t$, and for $s_0$ to be a single-valued function
%of $s_t$. Moreover, by having high enough potential
%barriers separating the macrostates $s$ of the second system, we can 
%even guarantee that $s_t = s_0$ for all $t$ during step (II) of the Q process.

\section{Organisms} 
\label{sec:organisms}

In this section, I consider biological systems that process an input into an output, an output that
specifies some action that is then taken back to the environment.
As shorthand, I will refer to any biological system that does this as an ``organism''.
A cell exhibiting chemotaxis is an example of an organism, with its input being (sensor
readings of) chemical concentrations and its output being chemical signals that in turn specify some directed motion it will follow.
Another example is a eusocial insect colony, with its inputs
being the many different materials that are brought into the nest (including 
atmospheric gases) and its output being material waste products (including heat) that
in turn get transported out of the colony.

%More precisely, I start by defining
%``organisms'' that

Physically, each organism contains an ``input subsystem'', a ``processor subsystem'' and an ``output subsystem''
(among others). The initial macrostate of the input subsystem is formed by sampling some distribution
specified by the environment and is then copied to the macrostate of the processor subsystem.
Next, the processor iterates some specified first-order 
time-homogenous Markov chain (for example, if the organism is a cell, this Markov chain models
the iterative biochemical processing of the input that takes place within the organism).
The ending value of the chain is the organism's output, which specifies the action that the organism
then takes back to its environment. In general, it could be that for certain inputs, an 
organism never takes any action back to its environment, but instead keeps
processing the input indefinitely. Here, that is captured by having the Markov chain
keep iterating (e.g., the biochemical processing keeps going)
until it produces a value that falls within a certain predefined \textit{halting} (sub)set, which is then 
copied to the organism's output
\mbox{(the possibility that the processing} never halts also ensures that the organism is Turing 
complete~\cite{hopcroft2000jd,livi08,grunwald2004shannon}). 

There are {many}
features of information processing in real biological systems that are distorted in this model; it is
just a starting point. Indeed, some features are absent entirely. In particular,
since the processing is modeled as a {first}-order Markov chain, there is no way for an
organism described by this model to ``remember'' a previous input it received when determining what action
to take in response to a current input. Such features could be incorporated into
the model in a straight-forward way and are the subject of future work.

In the next subsection, I formalize this model of a biological input-output system, in terms of an
input distribution, a Markov transition matrix and a halting set.
I then analyze the minimal amount of work needed by {any} physical system that implements 
a given transition matrix when receiving inputs from a given distribution, \textit{i.e.}, the
minimal amount of work a real organism would need to implement 
its input-output behavior that it exhibits in its environment, if it were free to 
use any physical process that obeys the laws of physics. 
To perform this analysis, I will construct a specific physical process that 
implements an iteration of the Markov transition matrix of a given organism with minimal work, when inputs are
generated according to the associated input distribution.
%To ensure that the amount of work required by the process I construct
%is a lower bound on the amount of work of {any} physical process that implements the 
%transition matrix, the process I construct will be thermodynamically reversible. 
This process involves a sequence of multiple GQ processes.
\emph{It cannot be emphasized enough that these processes I construct
are $not$ intended to describe what happens in real biological input-output systems, even as a cartoon}. 
These processes are used only as a calculational tool, for finding a lower bound on 
the amount of work needed 
by a real biological organism to implement a given input-output transition matrix.

Indeed, because real biological systems are often quite inefficient, in practice, they will often use
far more work than is given by the bound I calculate. However, we might expect
that in many situations, the work expended by a real biological system that behaves according
to some transition matrix is approximately proportional to 
the work that would be expended by a perfectly efficient system obeying the same
transition matrix. Under that approximation,
the relative sizes of the bounds given below should reflect the relative sizes of the
amounts of work expended by real biological~systems.

\subsection{The Input and Output Spaces of an Organism}

Recall from Section~\ref{sec:gq_proc_def} that a subsystem $S$ cannot use a thermodynamically-reversible Q process
to update its own macrostate in an arbitrary way. However a different subsystem $S'$
can guide an arbitrary updating of the macrostate of $S$, with a GQ process. In addition,
the work required by a thermodynamically-reversible process that implements a given
conditional distribution from inputs to outputs is the same as the work required
by any other thermodynamically-reversible process that implements that same distribution. 

In light of these two facts, for simplicity,
I will not try to construct a thermodynamically-reversible process that implements any given
organism's input-output distribution directly, by iteratively updating the processor until
its state lies in the halting subset and then copying that state to the output. Instead,
I will construct a thermodynamically-reversible process that implements that same input-output distribution, but by
``ping-ponging'' GQ processes back and forth between the
state of the processor and the state of the output system, until the output's state lies in the halting set.

Let $W$ be the space of all possible microstates of a \textit{processor} subsystem, and
$U$ the (disjoint) space of all possible microstates of an \textit{output} subsystem.
Let $\X$ be a partition of $W$, \textit{i.e.}, a coarse-graining of it into a countable set of macrostates.
%with the dynamics across $W$ given by a first-order Markov
%chain $P$, where $P(\vec{w}) = P(w_0, w_1, \ldots)$ indicates the joint distribution of states $w$ across all time. 
%and ${\V}$ a partition of $W$, i.e., a coarse-graining of it into macrostates. 
Let $X$ be the set of labels of those partition elements, \textit{i.e.}, the range of the map $\X$
(for example, in a digital computer, $\X$ could be a map taking each microstate of the computer's main RAM,
$w \in W$, into the associated bit string, ${\cal{X}}(w) \in X$).
Similarly, let $\Y$ be a partition of $U$, the microstate of the output subsystem.
Let $Y$ be the set of labels of those partition elements, \textit{i.e.}, the range of the map $\Y$,
with $Y_{halt} \subseteq Y$ the halting subset of $Y$.
%(For example, each $v \in V$
%could be a different bit pattern in the memory of a digital computer.) Assume that we can write $V = X \times Y$ for 
%some \textbf{input} space $X$ and \textbf{output} space $Y$.
I generically write an element of $X$ as $x$ and an element of $Y$ as $y$.
I assume that $X$ and $Y$, the spaces of labels of the processor and output partition elements, 
respectively, have the same cardinality and, so, 
indicate their elements with the same labels. In particular, if we are concerned with Turing-complete
organisms, $X$ and $Y$ would both be $\{0, 1\}^*$, the set of all finite bit strings (a set that is \mbox{bijective
with $\N$). }

For notational convenience, I arbitrarily choose one non-empty element of 
$X$ and one non-empty element
of $Y$ and the additional label $0$ to both of them (for example, in a Turing machine, it could
be that we assign the label $0$ to the partition element that also has label $\{0\}$). Intuitively,
these elements represent the ``initialized'' state of the processor and output subsystems, respectively.

The biological system also contains an \textit{input} subsystem, with microstates $f \in F$ and
coarse-graining partition $\cal{F}$ that produces macrostates $b \in B$.
The space $B$ is the same as the space $X$ (and therefore is the same as $Y$).
The state of the input at time $t=0$, $b_0$, is formed by sampling an \textit{environment distribution} $\PP_1$. As an example, 
$b_0$ could be determined by a (possibly noisy) sensor reading of the external environment. 
%could model randomness in how a single
%user of a computer initializes the computer at $t=0$, 
%%or could model a distribution over users who initialize 
%%the computer, 
As another example, the environment of an organism could directly perturb 
the organism's input macrostate at $t=0$. 
For~simplicity, I assume that both the processor subsystem and 
the output subsystem are initialized before $b_0$ is generated, \textit{i.e.}, that $x_0 = y_0 = 0$. 
%(In~\cite{wolpert_landauer_2016a} the 
%macrostate analogous to $X$ is interpreted 
%as the state of a digital computer, and accordingly $\PP_1$, the distribution
%over initial states of the computer that is set externally, is referred to as a ``user'' distribution.)

After $b_0$ is set this way, it is copied to the processor subsystem, setting $x_1$. At this
point, we iterate a sequence of GQ processes in which $x$ is mapped to $y$, then
$y$ is mapped to $x$, then that new $x$ is mapped to a new $y$, \textit{etc}., until (and if) $y \in Y_{halt}$.
To make this precise, 
adopt the notation that $[\alpha, \alpha']$ refers to the joint state $(x = \alpha, y = \alpha')$. 
Then, after $x_1$ is set, we iterate the following multi-stage \textit{ping-pong} sequence:
%We are interested in the work expended by a thermodynamically reversible process that sets $x_0$
%by sampling $\PP_1$, and then iterates the following five-stage process:
\begin{enumerate}%[leftmargin=*,labelsep=3mm]
\item $[x_t, 0] \; \rightarrow\; [x_t, y_{t}]$, where $y_{t}$ is
formed by sampling $\pi(y_{t} \mid x_t)$;
\item $[x_t, y_{t}] \rightarrow [0, y_{t}]$;
\item If $y_{t} \in Y_{halt}$, the process ends;
\item $[0, y_{t}] \rightarrow [y_{t}, y_{t}]$;
\item $[y_{t}, y_{t}] \; \rightarrow\; [y_{t}, 0]$;
\item Return to (1) with $t$ replaced by $t+1$;
\end{enumerate}

If this process ends (at stage (3)) with
$t = \tau$, then the associated value $y_\tau$ is used to specify an action by the organism back on its environment.
At this point, to complete a thermodynamic cycle, both $x$ and $y$ are reinitialized to zero, in preparation for
a new input. 

Here, for simplicity, I do not consider the thermodynamics of the physical
system that sets the initial value of $b_0$ by ``sensing
the environment''; nor do I consider the thermodynamics of the physical
system that copies that value to $x_0$ (see~\cite{parrondo2015thermodynamics} and the references therein for
some discussion of the thermodynamics of copying). In~addition I do not analyze
the thermodynamics of the process in which the organism uses $y_\tau$ to ``take
an action back to its environment'' and thereby reinitializes $y$. I only calculate the minimal work required to
implement the phenotype of the organism, which here is taken to mean the iterated ping-pong sequence between $X$ and $Y$. 

Moreover, I do not make any assumption for what happens to $b_0$ after it is used to
set $x_1$; it may stay the same, may slowly decay in some way, \textit{etc}. Accordingly,
none of the thermodynamic processes considered below are allowed to exploit
(some assumption for) the value of $b$ when they take place to reduce the amount
of work they require. As a result, from now on, I ignore the input space and its~partition.

%These four stages end with the system in state $[y_{t+1}, 0]$, as desired. 

%So in particular, the dynamics of $A$ has no effect
%on that minimal amount of work. (For example,
%the dynamics of the biochemical processing inside a cell has no
%bearing on the minimal amount of work needed for that cell to implement some specified input-output
%behavior.)

%so
%long as we implement the Markov matrix coupling $x$ and $y$ in a thermodynamically reversible manner, 

%To model this formally ...\dhwc{can dispense with $A$ entirely, since only interested in
%the map from inputs to outputs, since they are what is seen by the outside world.
%So just need to iterate $(x_t, 0) \rightarrow \pi 
%\rightarrow \pi \rightarrow \ldots \rightarrow 
%(0, y_{t'})$ through $\pi$ as many times as needed to arrive at a $y_{t'} \in Y_{halt}$. 
%(if $x_t$ lies in the set of inputs that result in the process halting). That means repeating the steps
%\{1, 2, 3, 4, 1, 2, 3, 4, \ldots, 1, 2\} in the list below that define an organism process. Note that this changes
%my formula for minimal work, since it must include $H_{out}$ as well as $H_{in}$,
%and since the formula is now ``work for the entire computation, assuming it starts with input
%$x_t$ and halts, with output $y_{t'}$''. (Discuss how intermediate work terms all cancel out.) It therefore
%also changes the formula for reproductive fitness.}

Physically, a ping-pong sequence is implemented by some continuous-time 
stochastic processes over $W \times U$. Any such process induces an associated discrete-time 
stochastic process over $W \times U$. That discrete-time process comprises
a joint distribution $Pr$ defined over a (possibly
infinite) sequence of values $(w_0, u_0), \ldots (w_{t}, u_t), (w_{t+1}, u_{t+1}), \ldots$ That distribution in turn induces a 
joint distribution over associated pairs of partition element labels, $(w_0, u_0), \ldots (x_t, y_t), (x_{t+1}, y_{t+1}), \ldots$

%
%As an aside, in~\cite{wolpert_landauer_2016a} the macrostate analogous to $X$ is not interpreted as
%the input macrostate of a biological organism, but instead as the
%state of the IBDF of a digital computer. So $\pi$ is interpreted
%as the update function of that computer, and $x_0$ is interpreted as the initial state
%of the computer's IBDF. Accordingly, in that paper the input distribution is instead referred to as a ``user'' distribution.

%Since they are set by the designer of the system, $\pi$ 
%and the distributions $q^{v}_{in}$ are known to that designer.
%However I allow the designer to be uncertain about what $\PP_1$ is, since it
%is general set by an external user / environment, and so need not be under the control of
%the designer. 
%I take $\pi$ to be fixed and known to the
%designer of the system, along with , s

For calculational simplicity, I assume that $\forall y \in Y$, at the end of each stage in
a ping-pong sequence that starts at any time $t \in \N$, $Pr(u \mid y)$
is the same distribution, which I write as $q^y_{out}(u)$. I make the ana<us
assumption for $Pr(w \mid x)$ to define $q^x_{proc}(w)$
(in addition to simplifying the analysis, this helps ensure that we are considering cyclic
processes, a crucial issue whenever analyzing issues like the minimal amount of
work needed to implement a desired map). Note that $q^y_{out}(u) = 0$ if $\Y(u) \ne y$.
To simplify the analysis further, I also assume that all ``internal entropies'' of the processor macrostates are the same,
\textit{i.e.}, $S(q^y_{out}(U))$ is independent of $y$, and similarly for the internal entropies of the output macrostates.

%$Pr_t(w \mid x)$
%is the same distribution, which I write as $q^x_{in}(w)$. 
%This both simplifies the analysis and helps ensure that we are considering cyclic
%processes, a crucial issue whenever analyzing issues like the minimal amount of
%work needed to implement a desired map. (Note that $q^x_{in}(w) = 0$ if $\X(w) \ne x$.)
%To simplify the analysis further, I assume that all ``internal entropies'' of the processor macrostates are the same,
%i.e., $S(q^x_{in}(W))$ is independent of $x$.
%I make analogous assumptions for the relation between $y, Y$ and $\Y$ at all $t \in \N$, 
% an analogous definition for $q^y_{out}(u)$, and an analogous assumption about the
% internal entropies of the output macrostates.

Also for calculational simplicity, I assume that at the end of each stage in
a ping-pong sequence that starts at any time $t \in \N$, there is no interaction Hamiltonian coupling any of the three
subsystems (though obviously, there must be such coupling at non-integer times).
I also assume that at all such moments, the Hamiltonian over $U$ is the same function, which 
I write as $H_{out}$. Therefore, for all such moments, the expected value of the Hamiltonian over $U$ if the 
system is in state $y_t$ at that time is:
\ba
\E(H_{out} \mid y) &=& \sum_u q^{y}_{out}(u) H_{out}(u)
\label{eq:expected_H}
\ea
%Again, I make analogous assumptions for the Hamiltonian over $U$, $H_{out}$.

Similarly, $H_{in}$ and $H_{proc}$ define the Hamiltonians at all such moments, over the input and processor
subsystems, respectively.

I will refer to any quadruple $(W, \X, U, \Y)$ and three associated Hamiltonians as an \textit{organism}.

For future use, note
that for any iteration $t \in \N$, initial distribution $\PP'(x_1)$, conditional distribution $\pi(y \mid x)$ and halting subset $Y_{halt} \subseteq Y$,
\ba
\PP'(y_t \in Y_{halt}) &=& \sum_{y_t} \PP'(y_t) I(y_t \in Y_{halt}) \nonumber \\
 &=& \sum_{x_t, y_t} \PP'(x_t) \pi(y \mid x)|_{x=x_t,y=y_t} I(y_t \in Y_{halt})
\ea
\ba
\PP'(y_t \mid y_t \in Y_{halt}) &=& \frac{\sum_{x_t}\PP'(x_t) \pi(y \mid x)|_{x=x_t,y=y_t} I(y_t \in Y_{halt})}
{\sum_{x_t,y_t} \PP'(x_t)\pi(y \mid x)|_{x=x_t,y=y_t} I(y_t \in Y_{halt})} \nonumber \\
\ea
and similarly:
\ba
 \PP'(x_{t+1} \mid y_t \not\in Y_{halt}) &=& 
 \frac{\sum_{x_t}\PP'(x_t) \pi(y \mid x)|_{x=x_t,y=x_{t+1}} I(x_{t+1} \not\in Y_{halt})} 
{\sum_{x_t,y_t} \PP'(x_t) \pi(y \mid x)|_{x=x_t,y=x_{t+1}} I(x_{t+1} \not\in Y_{halt})}  \nonumber \\
\ea

Furthermore,
\ba
S(\PP_t(X)) &=& -\sum_x \PP_t(x) \ln[\PP_t(x)] \\
S(\PP_{t+1}(X)) &=& -\sum_{x,y} \PP_t(x) \pi(y \mid x) \ln \bigg[\sum_{x'} \PP_t(x') \pi(y \mid x') \bigg]  \nonumber \\
\ea

I end this subsection with some notational comments. 
I will sometimes abuse notation and put time indices on distributions
rather than variables, e.g., writing $Pr_t(y)$ rather than $Pr(y_t = y)$.
In addition, sometimes, I abuse notation with temporal subscripts. In particular, when the initial distribution over $X$ is $\PP_1(x)$,
I sometimes use expressions like:
\ba
\PP_t(w) &\equiv& \sum_{x} \PP_t(x) q^{x}_{in}(w) \\
\PP_t(u) &\equiv& \sum_{y} \PP_t(y) q^{y}_{out}(u) \\
%\ea
%and similarly for $t=1$ write
%\ba
\PP_{t}(y) &\equiv& \sum_{x_t} \PP_t(x_t) \pi(y_{t} \mid x_t) \\
\PP_{t+1}(x \mid y_t) &\equiv& \delta(x, y_t) 
\ea

However, I will always be careful when writing joint distributions over
variables from different moments of time, e.g., writing:
\ba
\PP(y_{t+1}, x_t) &\equiv& \PP(y_{t+1} \mid x_t) \PP(x_t) \nonumber \\
&=& \pi(y_{t+1} \mid x_t) \PP_t(x_t)
\ea

%\dhwc{Somewhere state that the way that an actual biological system's actuator depends on the internal state of the
%organism is what ultimately determines what coarse-graining of $Y$ we are
%interested in, and the way that the biological system's sensor affects the internal state of the organism
%is what ultimately determines what coarse-graining of $X$ we are interested in.}

%As examples, the designer could be an experimentalist
%constructing a special-purpose processor to be embedded in their experiment, in
%which case $\pi$ is the map performed by that processor. As another example,
%the designer could be a hobbyist
%who is constructing a general-purpose laptop, with $\pi$ the operating system of the laptop.
%More formally, we are given a $\pi$, and wish to construct an organism with microstates $W$ and $U$, and associated macrostates
%$X$ and $Y$, such that the evolution of the microstates $(w, u) \in W \times U$ during $t \in [0, 1)$ results in the conditional
%distribution $\pi(y_1 \mid x_0)$ over the macrostates. Importantly, we require that 
%the dynamics result in the conditional distribution $\pi$ for all $x_0$, and therefore for any $\PP_1(x)$.
%(In contrast, some work in the literature constructs a process that maps one specific $\PP_1(x)$
%to $\PP_1 \equiv \pi \PP_1$, but does not map a generic distribution $\PP’_0 \ne \PP_1$
%to $\pi \PP’_0$.)

%\dhwc{Careful to be consistent in using ``input'' to refer to observations, with ``processor''
%being the subsystem that undergoes the iterations.}

\subsection{The Thermodynamics of Mapping an Input Space to an Output Space}
\label{sec:thermo_input_to_output}

Our goal is to construct a physical process $\Lambda$ over an organism's quadruple $(W, \X, U, \Y)$ that implements
an iteration of a given ping-pong sequence above for any particular $t$.
In addition, we want $\Lambda$ to be thermodynamically optimal with the stipulated starting and ending joint
Hamiltonians for all iterations of the ping-pong sequence when it is run on an initial joint distribution: 
\ba
\PP_1(x, y) = \PP_1(x) \delta(y, 0)
\label{eq:start}
\ea
%So $y$ starts in its initialized state, i.e., $y_t = 0$, and the initial value
%$x_t$ gets set to a value formed by sampling $\PP_t$. 
%(For example, the organism may possess
%a dedicated sensor of the external environment whose value is set by sampling $\PP_t$, and that
%value of that sensor is copied into the input subsystem of the organism to set $x_t$.)

%Use the notation $[a, b]$ to refer to the joint state $(x = a, y = b)$.
%At the beginning of iteration $t$, the system is in state $[x_t, 0]$. We then
%run $\pi$ on $x_t$, to (stochastically) set $y_{t+1}$. Before running the next
%iteration of $\pi$ we must map that state to $[y_{t+1}, 0]$. (So at time $t+1$, the ``input distribution''
%evaluated at any $x$ is $\PP_{t+1}(y)|_{y=x}$, and $y$ has been reinitialized
%to $0$.) That completes the cycle.
%
%Writing it out in detail, this full process involves the following four successive stages:
%\begin{enumerate}
%\item $[x_t, 0] \; \rightarrow\; [x_t, y_{t+1}]$, where $y_{t+1}$ is
%formed by sampling $\pi(y_{t+1} \mid x_t)$;
%\item $[x_t, y_{t+1}] \rightarrow [0, y_{t+1}]$;
%\item $[0, y_{t+1}] \rightarrow [y_{t+1}, y_{t+1}]$;
%\item $[y_{t+1}, y_{t+1}] \; \rightarrow\; [y_{t+1}, 0]$;
%\end{enumerate}
%These four stages end with the system in state $[y_{t+1}, 0]$, as desired. 

In Appendix B, I present four separate GQ processes that implement stages (1), (2), (4) and (5) in a ping-pong sequence
(and so implement the entire sequence). The GQ processes for stages (1), (4) and (5)
are guaranteed to be thermodynamically reversible, for all $t$. However, each time-$t$ GQ process for stage (2) is
parameterized by a distribution $\G_t(x_t)$. Intuitively, that distribution is a guess, made by the ``designer''
of the (time-$t$) \mbox{stage (2)} GQ process, for the marginal distribution over the values $x_t$ at the beginning of 
the associated stage (1) GQ process. That stage (2) GQ process will also be
thermodynamically reversible, if the distribution over $x_t$ at the beginning of 
the stage (1) GQ process is in fact $\G_t(x_t)$. Therefore, for that input distribution, the sequence of GQ processes
is thermodynamically optimal, as desired. However,
as discussed below, in general, work will be dissipated if the stage (2) GQ process 
is applied when the distribution over $x_t$ at the beginning of stage (1) differs from $\G(x_t)$.

I call such a sequence of five processes implementing an iteration of a ping-pong sequence an \textit{organism process}.
It is important to emphasize that I do \emph{not} assume that any particular real biological system runs an organism
process. An organism process provides a counterfactual model of how to implement
a particular dynamics over $X \times Y$, a model that allows us to calculate
the minimal work used by any actual biological system that implements that dynamics.

Suppose that an organism process always halts for any $x_1$, such that $\PP_1(x_1) \ne 0$.
Let $\tau^*$ be the last iteration
at which such an organism process may halt, for any of the inputs $x_1$, such that $\PP(x_1) \ne 0$
(note that if $X$ is countably infinite, $\tau^*$ might be countable infinity). Suppose further
that no new input is received before $\tau^*$ if the process halts at some $\tau < \tau^*$ and
that all microstates are constant from such a $\tau$ up to $\tau^*$ (so, no new
work is done during such an interval). In light of the iterative
nature of organism processes, this last assumption is equivalent to
assuming that $\pi(y_t \mid x_t) = \delta_{y_t, x_t}$ if $x_t \in Y_{halt}$. 

I say that the organism process is \textit{recursive} when all of these conditions are met, since that is the adjective used in
the theory of Turing machines. For a recursive organism process, the ending distribution over $y$ is:
\ba
\PP(y_{\tau^*}) &=& \sum_{x_1, \ldots, x_{\tau^*}} \pi(y_{\tau^*} \mid x_{\tau^*}) \PP_1(x_1) \prod_{t=1}^{\tau^*} \pi(x_{t} \mid x_{t-1})
\ea
and:
\ba
\PP(y_{\tau^*} \mid x_1) &=& \sum_{x_2, \ldots, x_{\tau^*}} \pi(y_{\tau^*} \mid x_{\tau^*}) \prod_{t=1}^{\tau^*} \pi(x_{t} \mid x_{t-1})
\label{eq:conditional}
\ea

\begin{proposition}
\label{prop:1}
Fix any recursive organism process, iteration $t \in \N$,
initial distributions $\PP_1(x), \PP'_1(x)$, conditional distribution $\pi(y \mid x)$ and halting subset $Y_{halt} \subseteq Y$.
%There exists a physical process $\Lambda$ that evolves distributions over the joint space $X \times Y$ such that
\begin{enumerate}%[leftmargin=*,labelsep=3mm]
\item With probability $\PP'(y_t \in Y_{halt})$,
the ping-pong sequence at iteration $t$ of the associated organism process maps the distribution:
\ba
\PP'(x_t)\delta(y_{t-1},0) &\rightarrow& \delta(x_{t},0) \PP'(y_t \mid y_t \in Y_{halt}) \nonumber
\ea
and then halts, and with probability $1 - \PP'(y_t \in Y_{halt})$, it instead maps:
\ba
 \PP'(x_t)\delta(y_{t-1},0) &\rightarrow& \PP(x_{t+1} \mid y_t \not\in Y_{halt}) \delta(y_t, 0)
% 
% \frac{\sum_{x'}\PP'_t(x') \pi(y \mid x')|_{y=x_{t+1}} I(x_{t+1} \not\in Y_{halt})} 
%{\sum_{x',y} \PP'_t(x') \pi(y \mid x')|_{y=x_{t+1}} I(x_{t+1} \not\in Y_{halt})} \delta(y_{t},0)
 \nonumber
\ea
and continues.

\item 
%If applying the ping-pong sequence at iteration-$t$ results in $y_t \in Y_{halt}$, then it
%uses work
If $\G_t = \PP_t$ for all $t \le \tau^*$,
the total work the organism expends to map the initial distribution $\PP_1(x)$
to the ending distribution $\PP_{\tau^*}(y)$ is:
%
%\dhwc{Got to here. I now have to calculate the work required for each of the 4 GQ
%processes in the organism process.
%
%The original equation above is
%\ba
%\Delta F_{neq}^{H^t_{sys}, H^t_{sys}} (\rho^t, \rho^{t+1}) &=& 
%	\bigg[\sum_v \rho^t(v) \rho^t(s \mid v) \big({\overline{\pi}}(r \mid v) - \rho^t(r \mid v)\big) H^t_{sys}(r, s) \bigg]
%		 - kT\bigg[S(\rho^{t+1}) - S(\rho^t)\bigg] \nonumber
%\ea	
%Here's the original version of the result:
%
%
%%Applying $\Lambda$ to $\delta(y_t,0)\PP_t(x_t)$ requires work
%\ba
%\WW^\pi_{\PP_1} &\equiv& \sum_{t = 1}^{\tau^*} \bigg[ \sum_{y_t \in Y_{halt}} \PP(y_t) \bigg]
%\bigg[ \sum_{x,y} \PP_t(x) \big(\pi(y \mid x) H_{in}(y) - H_{in}(x)\big)
%		\;+\; kT \big(S(\PP_t(X)) - S(\PP_{t+1}(X))\big) \bigg] \nonumber
%\ea
%
%So I've got to go from $\overline{\pi}$ to $\pi$, make sure internal entropies go away, and also change from $H_{sys}(r, s)$
%to $H_{out}(y)$. }
%
\ba
\WW^\pi_{\PP_1} &\equiv&
% \sum_{t = 1}^{\tau^*} \bigg[ \sum_{y_t \in Y_{halt}} \PP(y_t) \bigg] \bigg[ 
%\sum_{x,y} \PP_{t}(x) \big(\pi(y \mid x) H_{out}(y) - H_{out}(x)\big)
\sum_{y} \PP_{\tau^*}(y) \E(H_{out} \mid y) - \E(H_{out} \mid y')|_{y' = 0} \nonumber \\
&&  - \;  \sum_{x} \ \PP_{1}(x) \E(H_{in} \mid x) \;+\; \E(H_{in} \mid x')|_{x' = 0} \nonumber \\
		&& +\; kT \big(S(\PP_1(X)) - S(\PP_{\tau^*}(Y))\big) \nonumber
\label{eq:fillin}
\ea

\item There is no physical process that  
both performs the same map as the organism process and that requires
less work than the organism process does when applied to $\PP(x_t)\delta(y_t,0)$.
\end{enumerate}
\end{proposition}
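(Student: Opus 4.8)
The plan is to treat the three claims in sequence, since the first feeds the second and the second feeds the third. For part~1 I would simply propagate the distribution $\PP'(x_t)\delta(y_{t-1},0)$ through the five stages, invoking Proposition~\ref{prop:22} at each GQ stage. Stage~(1) is a GQ process over the output guided by the processor macrostate, so it sends the joint distribution to $\PP'(x_t)\pi(y_t\mid x_t)$; stage~(2) resets the processor to $0$, leaving $\delta(x_t,0)\sum_{x}\PP'(x)\pi(y_t\mid x)$. The halting test in stage~(3) then partitions the surviving ensemble: with probability $\PP'(y_t\in Y_{halt})$ the conditioned distribution is exactly $\delta(x_t,0)\PP'(y_t\mid y_t\in Y_{halt})$, which is the first claimed map, and with the complementary probability the process proceeds through the copy stage~(4) and reset stage~(5), which together carry $y_t$ into the processor and clear the output, yielding $\PP(x_{t+1}\mid y_t\not\in Y_{halt})\delta(y_t,0)$. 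Each step here is pure bookkeeping once Proposition~\ref{prop:22} certifies that every GQ process realizes the stated conditional map on an arbitrary input distribution.

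For part~2 the key observation is that the hypothesis $\G_t=\PP_t$ for all $t\le\tau^*$ is exactly what makes stage~(2) of every iteration thermodynamically reversible (stages~(1),~(4),~(5) are reversible unconditionally), so that the work expended in each stage equals that stage's nonequilibrium free-energy change, as in~\erf{eq:gq_work_required}. Because the final configuration of each stage is the initial configuration of the next, I would argue that the per-stage changes $\Delta F_{neq}$ telescope across all stages and all iterations, collapsing to $F_{neq}(\text{final})-F_{neq}(\text{initial})$, where the global initial distribution is $\PP_1(x)\delta(y,0)$ and the global final distribution is $\delta(x,0)\PP_{\tau^*}(y)$. Evaluating this difference, the energy part contributes $\sum_y\PP_{\tau^*}(y)\E(H_{out}\mid y)-\E(H_{out}\mid 0)-\sum_x\PP_1(x)\E(H_{in}\mid x)+\E(H_{in}\mid 0)$, while the entropy part, using the assumption that the internal entropies $S(q^x_{proc})$ and $S(q^y_{out})$ are independent of the macrostate so that they cancel between the initial and final configurations, reduces to $kT\big(S(\PP_1(X))-S(\PP_{\tau^*}(Y))\big)$. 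These are precisely the terms defining $\WW^\pi_{\PP_1}$.

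Part~3 is then an immediate appeal to the second law as stated in Section~\ref{sec:2.2}: among all processes that map $\PP(x_t)\delta(y_t,0)$ to the same final distribution with the same starting and ending Hamiltonians, a thermodynamically reversible one requires the least work, and the organism process run on its design distribution is reversible by part~2. Hence no competing process can use less work.

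I expect the main obstacle to be making the telescoping argument of part~2 rigorous in the presence of the branching halting structure and a possibly countably infinite $\tau^*$. Unlike a linear sequence of maps, probability mass peels off at each stage~(3), so I would need the recursive assumption---that $\pi(y_t\mid x_t)=\delta_{y_t,x_t}$ on $Y_{halt}$ and that all microstates freeze after halting---to guarantee that halted branches contribute no further work and therefore do not spoil the cancellation of the intermediate free energies. Establishing that the expected total work, summed over iterations with the correct halting weights, still equals the single free-energy difference between the global endpoints is the delicate step; the internal-entropy cancellation and the energy bookkeeping are then routine.
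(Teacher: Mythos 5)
Your proposal follows essentially the same route as the paper's proof: part~1 by repeated application of Proposition~\ref{prop:22} across the stages, part~2 by equating each stage's work with its nonequilibrium free-energy change via \erf{eq:gq_work_required} and telescoping the intermediate terms (the paper likewise notes that all terms for iterations $t \in \{2, \ldots, \tau^*-1\}$ cancel), and part~3 by the second law applied to the thermodynamically reversible process guaranteed by $\G_t = \PP_t$. Your closing caveat about making the telescoping rigorous under the branching halting structure identifies a point the paper's own proof passes over silently, but it does not alter the argument.
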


\begin{proof}
Repeated application of Proposition~\ref{prop:22} gives the first result.

%The work required to copy $b_1$ to $x_1$ 
%can be recovered if and when the system eventually halts, since
%$x$ will have returned to the initialized value $0$ then. So all work terms involving
%the state of the processor system cancel out; we only need to consider the work
%to change $Y$ from $y_0 = 0$ to $y_{\tau^*}$. This follows
Next, combine~Equation (\ref{eq:pi_partial}) in Appendix B,~Equation (\ref{eq:gq_work_required}) and
our assumptions made just before Equation~(\ref{eq:expected_H}) to 
calculate the work needed to implement the GQ process of the first stage of an organism process at iteration $t$:
\ba
&&	\bigg[\sum_{x,y,u} \bigg( \PP_t(x) \pi(y \mid x)q^y_{out}(u) - q^0_{out}(u)\bigg) H_{out}(u) \bigg]
		 - kT\bigg[S(\PP_{t}(Y)) - S(\PP_{t-1}(Y))\bigg] \nonumber \\
&& = \; 	\sum_y \PP_t(y) \E(H_{out} \mid y) \;-\; \E(H_{out} \mid y')|_{y'=0} \;-\; kTS(\PP_{t}(Y)) \nonumber
\ea
Analogous equations give the work for the remaining three GQ processes.
Then, apply these equations repeatedly, starting with 
the distribution given in~Equation (\ref{eq:start}) (note that all terms for iterations of the ping-pong sequence
with $t \in \{2, 3, \ldots, \tau^* - 1\}$ cancel out). This gives the second result.

%(Note that the term
%$\sum_{y_t \in Y_{halt}} \PP(y_t)$ in result (2) is the probability that the organism process halts
%at iteration $t$.) 
Finally, the third result is immediate from the assumption that $\G_t = \PP_t$ for all $t$,
which guarantees that each iteration of the organism process is thermodynamically
reversible.
\end{proof}

The first result in Proposition~\ref{prop:1} means that no matter what the initial distribution over $X$ is, the 
organism process updates that distribution according to $\pi$, halting whenever it produces a value in $Y_{halt}$. 
This is true even if the output of $\pi$ depends on its input (as discussed in
the Introduction, this property is violated for many of the physical processes considered in the literature).

The first terms in the definition of $\WW^\pi_{\PP_1}$, given by a sum of expected values of the Hamiltonian,
can be interpreted as the ``labor'' done by the organism when processing $x_1$ into $y_{\tau^*}$,
e.g., by making and breaking chemical bonds. It quantifies the minimal amount of external
free energy that must be used to implement the amount of labor that is (implicitly) specified
by $\pi$. The remaining terms, a difference
of entropies, represent the free energy required by the ``computation'' done by the organism when it 
undergoes $\pi$, independent of the labor done by the organism.

\subsection{Input Distributions and Dissipated Work}
\label{sec:dissipated_work}

%Suppose we build a device that implements some 
%particular organism process, and that all four of the stages of that process are thermodynamically reversible,
%whenever the device is applied to some particular distribution $\G_t(x)$.
%Then the GQ processes in stages (1), (3) and (4) of any organism
%process are all thermodynamically reversible --- even if the 
%device is run on an initial distribution $\PP_t(x)$ that differs from $\G_t(x)$.
%However a

Suppose that at the beginning of some iteration $t$ of an organism process, the distribution over $x_t$ is some $\PP(x_t)$ that differs from
$\G_t(x_t)$, the prior distribution ``built into'' the (quenching Hamiltonians defining the)
organism process. Then, as elaborated at the beginning of Section~\ref{sec:thermo_input_to_output}, in general,
this iteration of the organism process will result in dissipated work.

As an example, such dissipation will occur if the organism process is used in an environment that generates inputs according
to a distribution $\PP_1$ that differs from $\G_0$, the distribution ``built into'' the
organism process. In the context of biology, if a biological system gets optimized by 
natural selection for one environment, but is then used in another one, it will necessarily
operate (thermodynamically sub-optimally) in that second environment.

Note though that one could imagine designing an organism to operate optimally for a distribution
over environments, since that is equivalent to a single average distribution
over inputs. More precisely, a distribution $Pr(\PP_1)$ over environments is equivalent
to a single environment generating inputs according to:
\ba
Pr(x_1) &=& \sum_{\PP_1} Pr(\PP_1) \PP_1(x_1)
\ea
We can evaluate the thermodynamic cost $\Omega^\pi_{Pr}$ for this organism that behaves optimally
for an uncertain environment.

As a comparison point, we can also evaluate the work used in an impossible
scenario where $\PP_1$ varies stochastically but the organism magically ``knows'' what each $\PP_1$ is
before it receives an input sampled from that $\PP_1$, and then changes its distributions $\G_t$
accordingly. The average thermodynamic cost in this impossible scenario would be
\ba
\sum_{\PP_1} Pr(\PP_1) \Omega^\pi_{\PP_1}
\ea
In general 
\ba
\Omega^\pi_{Pr} &\ge& \sum_{\PP_1} Pr(\PP_1) \Omega^\pi_{\PP_1}
\ea
with equality only if $Pr(.)$ is a delta function about one particular $\PP_1$. So in general,
even if an organism choose its (fixed) $\G_0$ to be optimal for an uncertain environment, it cannot
do as well as it would if it could magically change $\G_0$ appropriately before each new environment it encounters.

As a second example, in general, as one iterates an organism process, the initial distribution $\PP_1(x)$ 
is changed into a sequence of new distributions $\{\PP_1(x), \PP_2(x), \ldots \}$. In general,
many of these distributions will differ, \textit{i.e.}, for many $t'$, $\PP_{t'+1} \ne \PP_{t'}$.
Accordingly, if one is using some particular physical device to implement the organism process, unless that
device has a clock that it can use to update $\G_t$ from one iteration to the next (to~match the changes in $\PP_t$),
the distribution $\G_t$ built into the device will differ from $\PP_t$ at some times $t$. Therefore, without such a clock, work will
be dissipated.

Bearing these caveats in mind, unless explicitly stated otherwise, in the sequel, I assume that 
the time-$t$ stage (2) GQ process of an organism makes the correct guess
for the input distribution at the start of the time-$t$ ping-pong sequence, \textit{i.e.},
that its parameter $\G_t$ is always the same as 
the distribution over $x$ at the beginning of the time-$t$ stage (1) process.
In this case, the minimal free energy required by the organism is $\WW^\pi_{\PP_1}$,
and no work is dissipated.

It is important to realize that in general, if one were to run a Q process over $X$ in the second stage of an
organism process, rather than
a GQ process over $X$ guided by $Y$, there would be nonzero dissipated work. The~reason is that if we ran such a Q process,
we would ignore the information in $y_{t+1}$
concerning the variable we want to send to zero, $x_t$. 
In contrast, when we use a GQ process over $X$ guided by $Y$, no information is ignored, and
we maintain thermodynamic reversibility. The extra work of the Q process beyond that of the GQ process is:
\ba
kTS(X_t) - kTS(X_t \mid Y_{t+1}) &=& kTI(X_t ; Y_{t+1})
\label{eq:mutual_info}
\ea

In other words, using the Q process would cause us to dissipate work $kT I(X_t ; Y_{t+1})$. This amount of
dissipated work equals zero if the output of $\pi$ is independent of its input, as in bit erasure. It also
equals zero if $P(x_t)$ is a delta function.
%, or if $\pi$ implements a single-valued, invertible function.
However, for other $\pi$ and $P(x_t)$, that dissipated work will be nonzero.
In such situations, stage 2 would be thermodynamically \emph{ir}reversible if we used a
Q process over $X_t$ to set $x$ to zero. 

As a final comment, it is important to emphasize that no claim is being made that the only
way to implement an organism process is with Q processes and/or GQ processes.
However, the need to use the organism process in an appropriate environment,
and for it to have a clock, should be generic, if we wish to avoid dissipated work.

\subsection{Optimal Organisms}

%\subsection{Optimal sensor-actuator maps}
From now on, for simplicity, I restrict attention to recursive organism processes.

Recall that adding noise to $\pi$ may reduce the amount of work required to implement it.
Formally, Proposition~\ref{prop:1} tells us that everything else being equal, the larger $S(\PP_{\tau^*}(Y))$ is,
the less work is required to implement the associated $\pi$
(indeed, the thermodynamically-optimal implementation of a one-to-many map $\pi$ actually draws in free energy
from the heat bath, rather than requiring free energy that ends up being dumped into that heat bath). This implies that an organism will
want to implement a $\pi$ that is as noisy as possible. 

In addition, not all maps $x_1 \rightarrow y_{\tau^*}$ are equally important to an organism's reproductive
fitness. It will be important to be very precise in what output is produced for some
inputs $x_1$, but for other inputs, precision is not so important. Indeed, for some
inputs, it may not matter at all what output the organism produces in response.

In light of this, natural selection would be expected to favor $\pi$'s that are as noisy
as possible, while still being precise for those inputs where reproductive fitness requires it.
To simplify the situation, there are two contributions to the reproductive fitness of
an organism that implements some particular $\pi$: the free energy (and other resources)
required by that implementation and the ``phenotypic fitness'' that would arise
by implementing $\pi$ even if there were no resources required to implement it.

Therefore, there will be a tradeoff between the resource cost of
being precise in $\pi$ with the phenotypic fitness benefit of being precise. In particular, there
will be a tradeoff between the thermodynamic cost of being precise in $\pi$ (given by
the minimal free energy that needs to be used to implement $\pi$) and the phenotypic
fitness of that $\pi$.
In this subsection, I use an extremely simplified and abstracted model of reproductive fitness
of an organism to determine what $\pi$ optimizes this~tradeoff.

To start, suppose we are given a real-valued \textit{phenotypic fitness} function $f(x_1, y_{\tau^*})$. This
quantifies the benefit to the organism of being precise in what output it
produces in response to its inputs. More precisely, $f(x_1, y_{\tau^*})$
quantifies the impact on the reproductive fitness of the organism that arises 
if it outputs $y_{\tau^*}$ in response to an input $x_1$ it received, minus
the effect on reproductive fitness of how the organism generated that response. 
That second part of the definition means that behavioral fitness does
not include energetic costs associated with mapping $x_1 \rightarrow y_{\tau^*}$. Therefore, it includes neither the
work required to compute a map taking $x_1 \rightarrow y_{\tau^*}$ nor the labor
involved in carrying out that map going into $f$ (note that in some toy models, $f(x_1, y_{\tau^*})$
would be an expectation value of an appropriate quantity, taken over states of the environment, and conditioned 
on $x_1$ and $y_{\tau^*}$). 
For an input distribution $\PP_1(x)$ and conditional distribution $\pi$, expected phenotypic fitness is:
\ba
\E_{\PP_1, \pi}(f) &=& \sum_{x_1,y_{\tau^*}} \PP_1(x_1) \PP(y_{\tau^*} \mid x_1) f(x_1, y_{\tau^*})
\ea
where $\PP(y_{\tau^*} \mid x_1)$ is given by~Equation (\ref{eq:conditional}).
%For
%simplicity, from now on I assume phenotypic fitness is independent of $t$, so that we can drop the time subscript on
%$f$. (However to allow the analysis to apply to non-stationary environments that set initial values $x$, I
%keep the $t$ subscript on $\PP_t$.)

The expected phenotypic fitness of an organism if it implements $\pi$ on the initial
distribution $\PP_1$ is only one contribution to the overall reproductive fitness of the
organism. In addition, there is a reproductive fitness cost to the organism 
that depends on the specific physical process it uses to implement $\pi$ on $\PP_1$. In particular,
there is such a cost arising from the physical resources that the process requires. 

There are several contributions to this cost. In particular,
different physical processes for implementing $\pi$ will require different sets of chemicals from the
environment, will result in different chemical waste products, \textit{etc}. Here, I ignore
such ``material'' costs of the particular physical process the organism uses to implement \mbox{$\pi$ on $\PP_1$. }

However, in addition to 
these material costs of the process, there is also a cost arising from the thermodynamic work required to run that process.
If we can use a thermodynamically-reversibly process, then by~Equation (\ref{eq:fillin}), for fixed $\PP_1$ and $\pi$, the 
minimal possible such required work is $\WW^\pi_{\PP_1}$. Of course, in many biological scenarios,
it is not possible to use a thermodynamically-reversible organism process to implement~$\pi$. As discussed in
Section~\ref{sec:dissipated_work}, this is the case if the organism process is ``designed'' for
an environment that generates inputs $x$ according to $\G_1(x)$ while the actual environment in which the process is used
generates inputs according to some $\PP_1 \ne \G_1$. However, there are other reasons
why there might have to be non-zero dissipated work. In particular, there is non-zero dissipated
work if $\pi$ must be completed quickly, and so, it cannot be implemented using a quasi-static process 
(it does not do an impala any good to be able to compute the optimal direction in which to flee a tiger chasing it,
if it takes the impala an infinite amount of time to complete that computation). Additionally, of course, it may
be that a minimal amount of work must be dissipated simply because of the limited kinds of 
biochemical systems available to a real organism.

I make several different simplifying assumptions:
\begin{enumerate}%[leftmargin=*,labelsep=3mm]
\item In some biological scenarios, the amount of
such dissipated work that cannot be avoided in implementing $\pi$, $\hat{W}^\pi_{\PP_1}$, will be comparable
to (or even dominate) the minimal amount of reversible work needed to implement $\pi$, $\WW^\pi_{\PP_1}$. 
%More importantly for current purposes, the \emph{change} in $\hat{W}^\pi_{\PP_t}$ 
%from one $\pi$ to another may not be proportional to the associated change to $\WW^\pi_{\PP_1}$.
However, for simplicity, in the sequel, I concentrate solely on the dependence on $\pi$ of the reproductive fitness of a process that
implements $\pi$ that arises due to its effect on ${W}^\pi_{\PP_1}$. Equivalently,
I assume that I can approximate differences
$\hat{W}^\pi_{\PP_1} - \hat{W}^{\pi'}_{\PP_1}$ as equal to $\hat{W}^\pi_{\PP_1} - \hat{W}^{\pi'}_{\PP_1}$
up to an overall proportionality~constant.

\item Real organisms have internal energy stores that allow them to use free energy extracted from the environment at a time $t' < 1$ to drive
a process at time $t=1$, thereby ``smoothing out'' their free energy needs. For simplicity, I ignore such energy stores.
Under this simplification, the organism needs to extract at least $\WW^\pi_{\PP_1}$ of free energy
from its environment to implement a single iteration of $\pi$ on $\PP_1$. That minimal amount of needed free energy
is another contribution to the ``reproductive fitness cost to the organism of physically implementing $\pi$ 
starting from the input distribution $\PP_1$''. 
%Call this minimal free energy the \textbf{implementation cost} to the organism
%for the given $\pi$ and $\PP_t$, $C(\pi, \PP_t)$
%Just as the measurement apparatus observes $v_0$ with no work, I assume that the action
%apparatus' measurement of $y_1$ requires no work. However after that measurement the action system
%typically will engage in physical activity requiring work, with the precise activity depending (in part
%at least) on the value $y_1$. The expected value of that thermodynamic work is absorbed into the
%associated values of $f(x_0, y_1)$. To distinguish the work done in implementing $\pi$ from this
%work done in response to $y_1$, I will refer to the former as \textbf{computation work}.
%
%Note that the thermodynamic engine is parameterized by the quenching Hamiltonians and $\pi$, which
%ultimately means it is parameterized by $\G_0$ and $\pi$.
%So natural selection indirectly chooses the pair $(\G_0, \pi)$. (In the case of biological organisms,
%that pair can be viewed as the organism's genome, together with any epigenetic information modifying
%the expression of that genome.)

\item As another simplifying assumption, I
suppose that the (expected) reproductive fitness of an organism that implements the map $\pi$
starting from $\PP_1$ is just:
\ba
\mathscr{F}(\PP_1, \pi, f) &\equiv& \alpha \E_{\PP_1, \pi}(f) - \WW^\pi_{\PP_1}
\ea
Therefore, $\alpha$ is the benefit to the organism's reproductive fitness of increasing $f$ by one, measured in
units of energy.
%%
%%In general,
%%there will be a prior over sensor distributions $\PP$ in which a system $(\G_0, \pi)$
%%might operate. (For example, this is the case when we have a set of multiple organisms 
%%that all have the same genome but that live
%%in different environments, and therefore are subject to different $\PP$'s.)
%
%\item Finally, make the (quite substantial) simplifying assumption that the effect on the organism's 
%reproductive fitness of changing $\pi$ can be modeled by its effect on $\mathscr{F}(\PP_1, \pi, f)$.
%%for some fixed $\PP_t$ and $f$. 
This ignores all effects on the distribution $\PP_1$ 
that would arise by having different $\pi$ implemented at times earlier than $t=1$. It also ignores the possible
impact on reproductive fitness of the organism's implementing particular sequences of
multiple $y$'s (future work involves weakening all of these assumptions, with particular attention to this last one).
Under this assumption, varying $\pi$ has no effect on $S(X_1)$, the initial entropy over processor states.
Similarly, it has no effect on the expected value of the Hamiltonian then.
\end{enumerate}

%Since
%$S(X_t, Y_{t+1}) = S(Y_{t+1} \mid X_t) + S(X_t)$, this means that we can replace
%\ba
%S(X_t) - 
%\ea
%
%we can write
%\ba
%\mathscr{F}(\PP_t, \pi, f) &=& \sum_{x,y} \PP_t(x) \bigg[ \alpha \pi(y \mid x) f(x, y) 
%
%
%&& \!\!\!\!\!\!\! \sum_{v_0, v_1} \PP_1(v_0) \pi(v_1 \mid v_0)\bigg\{\alpha f(x_0, y_1) \nonumber \\
%  && \qquad \qquad + \; kT \bigg(\ln[\pi(y_1 \mid x_0) ] - \ln \bigg[ \G(v_0 \mid v_{1}) \bigg] \bigg) \bigg\}
%\ea
%\dhwc{Old: Note as well that the loss function is unconcerned
%with the entries of $\pi$ describing the system's response to those components of $v_0$ other than $x_0$,
%and with those components specifying the values of $x_1$. So without loss of generality we can assume that those entries
%are set to minimize thermodynamic work. Given this, the objective of the designer of the system
%(be it a human engineer or natural selection) is to find the pair $(\G_0, \pi)$ that minimizes
%%\ba
%%&& \!\!\!\!\!\!\! \int d\PP_1 \; \sum_{v_0, v_1} \PP_1(v_0) \pi(v_1 \mid v_0)\bigg\{\alpha f(x_0, y_1) \nonumber \\
%%  && \qquad \qquad + \; kT \bigg(\ln[\pi(y_1 \mid x_0) ] - \ln \bigg[ \G(v_0 \mid v_{1}) \bigg] \bigg) \bigg\}
%%\ea
%where $\alpha$ measures phenotypic fitness $f$ in units of rate of free energy harvested from the
%environment.
%DHW: N.b., this formula needs modifying if we relax the assumption that all coarse-grained bins
%have the same expected energy.
%
%Consider the special case where $\PP$ is fixed, and $\G = \PP$. In this case, 
%for all values $x_0$, the engineer is interested in finding 
%

Combining these assumptions with Proposition~\ref{prop:1}, we see that after removing all terms in $\WW^\pi_{\PP_1}$ that 
do not depend on $\pi$, we are left with
$\sum_{y} \PP_{\tau^*}(y) \E(H_{out} \mid y) - \;kTS(\PP_{\tau^*}(Y))$. This gives the following result:
\begin{corollary}
\label{prop:2}
Given the assumptions discussed above, up to an additive constant that does not depend on $\pi$:
%\ba
%\sum_{x_t, y_{t+1}} \PP_t(x_t) \pi(y_{t+1} \mid x_t)\bigg\{\alpha f(x_t, y_{t+1}) + kT \ln[\pi(y_{t+1} \mid x_t) ] \bigg\} 
%\ea
%
%\dhwc{Ummm I think there should be a sum inside that log? We're interested in $S(Y_{t+1})$,
%not $S(Y_{t+1} \mid X_t)$ ... I put it in below ...}
%So given $\PP_t$ and $f$, the $\pi$ that maximizes reproductive fitness is the one that minimizes
%\ba
%\mathscr{F}(\PP_1, \pi, f) &=& \sum_{x_t, y_{t+1}} \PP_1(x_t) \pi(y_{t+1} \mid x_t)\bigg\{ \alpha f(x_t, y_{t+1}) - H_{out}(y_{t+1}) -
%	kT \ln \bigg[\sum_{x'_t} \PP_1(x'_t) \pi(y_{t+1} \mid x'_t) \bigg] \bigg\} \nonumber
%\label{eq:actual_minimal_work}
%\ea
\ba
&& \mathscr{F}(\PP_1, \pi, f) \nonumber \\
&& \qquad = \sum_{x_1, y_{\tau^*}} \PP(x_1) \PP(y_{\tau^*} \mid x_1) \bigg\{ \alpha f(x_1, y_{\tau^*}) - H_{out}(y_{\tau^*}) \nonumber \\
&& \qquad \qquad \qquad \qquad \qquad - kT\ln \bigg[\sum_{x'_1} \PP_1(x'_1) \PP(y_{\tau^*} \mid x'_1) \bigg] \bigg\} \nonumber
\label{eq:actual_minimal_work}
\ea
%where
%\ba
%\PP(x_1, y_{\tau^*}) &=& \sum_{x_2, \ldots, x_{\tau^*}} \pi(y_{\tau^*} \mid x_{\tau^*})
%	 \PP_1(x_1) \prod_{t=1}^{\tau^*} \pi(x_{t} \mid x_{t-1}) \nonumber
%\ea
\label{coroll:1}
\end{corollary}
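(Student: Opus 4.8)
The plan is to treat this as a purely algebraic reduction of the fitness functional $\mathscr{F}(\PP_1, \pi, f) = \alpha \E_{\PP_1, \pi}(f) - \WW^\pi_{\PP_1}$: substitute the expression for $\WW^\pi_{\PP_1}$ supplied by Proposition~\ref{prop:1}, then discard every summand that is constant in $\pi$. First I would write $\WW^\pi_{\PP_1}$ out term by term and sort its pieces by their $\pi$-dependence. The two terms built from $H_{in}$, namely $\sum_x \PP_1(x)\E(H_{in}\mid x)$ and $\E(H_{in}\mid x')|_{x'=0}$, together with the boundary term $\E(H_{out}\mid y')|_{y'=0}$, are fixed once $\PP_1$ and the (time-independent) Hamiltonians are fixed, so they are inert under variation of $\pi$. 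Crucially, the input-entropy term $kT\,S(\PP_1(X))$ is also $\pi$-independent: this is exactly the content of the last simplifying assumption, that varying $\pi$ leaves the initial distribution over processor states—and hence both $S(X_1)$ and $\E(H_{in})$—untouched. All of these collapse into the promised additive constant, leaving $\mathscr{F} = \alpha\E_{\PP_1,\pi}(f) - \sum_y \PP_{\tau^*}(y)\E(H_{out}\mid y) + kT\,S(\PP_{\tau^*}(Y))$ up to that constant.

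The second step is to rewrite each of the three surviving terms as a single sum over the pair $(x_1, y_{\tau^*})$ weighted by $\PP_1(x_1)\,\PP(y_{\tau^*}\mid x_1)$. The phenotypic-fitness term is already in this form by its definition. For the energy term I would insert the marginalization $\PP_{\tau^*}(y)=\sum_{x_1}\PP_1(x_1)\PP(y\mid x_1)$ coming from Equation~(\ref{eq:conditional}) and identify the macrostate-conditioned expectation $\E(H_{out}\mid y)$ with the macrostate energy $H_{out}(y_{\tau^*})$, which is legitimate under the assumption that $q^y_{out}$ (and its internal entropy) is the same for every $y$. For the entropy term I would expand $S(\PP_{\tau^*}(Y)) = -\sum_y \PP_{\tau^*}(y)\ln[\PP_{\tau^*}(y)]$, substitute the same marginalization in the prefactor $\PP_{\tau^*}(y)$ so that it factors through $\PP_1(x_1)\PP(y\mid x_1)$, and leave the logarithm as $\ln[\sum_{x'_1}\PP_1(x'_1)\PP(y\mid x'_1)]$—this is precisely the term that cannot be simplified further and that appears inside the braces of the claimed identity. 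Collecting the three rewritten sums under the common weight yields the braced integrand $\alpha f(x_1,y_{\tau^*}) - H_{out}(y_{\tau^*}) - kT\ln[\sum_{x'_1}\PP_1(x'_1)\PP(y_{\tau^*}\mid x'_1)]$, completing the derivation.

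There is no genuine analytic difficulty here; the work is entirely bookkeeping. The one place demanding care—what I would flag as the main obstacle—is the separation of constants in the first step, since it silently invokes the modeling assumption that $\pi$ does not react back on $\PP_1$, $S(X_1)$, or $\E(H_{in})$. Without that assumption the $H_{in}$ and $S(\PP_1(X))$ terms would themselves be functions of $\pi$ and could not be dropped, and the clean form of the integrand—from which the Boltzmann distribution advertised in the abstract is later extracted—would be lost. The only other subtlety is the notational identification of $\E(H_{out}\mid y)$ with $H_{out}(y_{\tau^*})$, which must be justified by the uniform-internal-entropy assumption rather than taken for granted.
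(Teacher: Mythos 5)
Your proposal is correct and follows essentially the same route as the paper, whose entire ``proof'' is the sentence preceding the corollary: substitute the expression for $\WW^\pi_{\PP_1}$ from Proposition~\ref{prop:1} into $\mathscr{F}(\PP_1,\pi,f)=\alpha\E_{\PP_1,\pi}(f)-\WW^\pi_{\PP_1}$, discard the $H_{in}$ terms, $\E(H_{out}\mid y')|_{y'=0}$, and $kT S(\PP_1(X))$ as $\pi$-independent (exactly as licensed by the third simplifying assumption, which you correctly flag as the load-bearing step), and expand the surviving $\sum_y \PP_{\tau^*}(y)\E(H_{out}\mid y)-kTS(\PP_{\tau^*}(Y))$ under the weight $\PP_1(x_1)\PP(y_{\tau^*}\mid x_1)$ using \erf{eq:conditional}. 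One small correction: the identification $H_{out}(y_{\tau^*})\equiv\E(H_{out}\mid y_{\tau^*})$ is pure notational shorthand via \erf{eq:expected_H} and needs no assumption at all, whereas your stated justification is false as written---$q^y_{out}$ cannot be the same distribution for every $y$, since $q^y_{out}(u)=0$ whenever $\Y(u)\ne y$, and if it really were $y$-independent the energy term would itself be an additive constant and would drop out of the corollary; the paper assumes only that the internal entropy $S(q^y_{out}(U))$ is independent of $y$, an assumption already consumed upstream in deriving the form of $\WW^\pi_{\PP_1}$ in Proposition~\ref{prop:1}.
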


\vspace{-14pt}The first term in Corollary~\ref{coroll:1} reflects the impact of $\pi$ on the phenotypic fitness of the organism. The second term 
reflects the impact of $\pi$ on the amount of labor the organism does. Finally, the last term 
reflects the impact of $\pi$ on the amount of computation the organism does; the greater the entropy of $y_{\tau^*}$, the less total computation is done. In different biological
scenarios, the relative sizes of these three terms may change radically. In some senses, Corollary~\ref{coroll:1}
can be viewed as an elaboration of~\cite{kussell2005phenotypic}, where the ``cost of sensing'' constant
in that paper is decomposed into labor and computation costs.

From now on, for simplicity, I assume that $Y_{halt} = Y$. So no matter
what the input is, the organism process runs $\pi$ exactly once to produce the output.
%for any 
%$\PP_1$ and $Y_{halt}$ being considered, all $\pi$ result
%in $\tau^* = 1$. (For example, this would be the case if $Y_{halt} = Y$, so that no matter
%what the input is, the organism process runs $\pi$ exactly once to produce the output.)
%%%
%Note that $\pi$ only occurs implicitly in Coroll.~\ref{coroll:1}, in
%the associated distribution $\PP(y_{\tau^*} \mid x_1)$. So for simplicity,
%to analyze the relationship between reproductive fitness and $\pi$
%I will not work with $\pi$ directly, but rather with $\PP(y_{\tau^*} \mid x_1)$. To emphasize
%that this conditional distribution depends solely on $\pi$, as shorthand I will write 
%this conditional distribution as 
%\ba
% {\pi^*} &\equiv& \pi^{\tau^*} 
%\ea
%(I overload the definitions of $\mathscr{F}$ and $\Omega$ in the obvious way to have
%${\pi^*}$ rather than $\pi$ as arguments.) Note that for any finite $\tau^*$,
%so long as we restrict attention to non-singular $\pi^*$, there is always at least one $\pi$
%that when raised to the power $\tau^*$ equals $\pi^*$. So we can do the analysis
%purely in terms of $\pi^*$ and be assured that there is a $\pi$ corresponding to that $\pi^*$.
Returning to our actual optimization problem, 
by Lagrange multipliers, if the ${\pi}$ that maximizes the expression in Corollary~\ref{prop:2}
lies in the interior of the feasible set, then it is the solution
to a set of coupled nonlinear equations, one equation for each pair $(x_1, y_{1})$:
\ba
&& \PP(x_1) \bigg\{H_{out}(y_{1}) - \alpha f(x_1, y_{1}) \nonumber \\
&& \qquad \qquad + kT \bigg(\ln \bigg[\sum_{x'_1} \PP(x'_1) {\pi}(y_{1} \mid x'_1) \bigg] + 1 \bigg)\bigg\} \;=\; \lambda_{x_1}
%	 \frac{\sum_{x',y'} \PP_t(x') \pi(y' \mid x')} {\sum_{x''} PP_t(x'') \pi(y \mid x'')}
\ea	 
where the $\lambda_{x_1}$ are the Lagrange multipliers ensuring that $\sum_{y_{1}} {\pi}(y_{1} \mid x_1) = 1$ for all $x_1 \in X$.
Unfortunately in general the solution may not lie in the interior, so that we have a
non-trivial optimization problem.

However, suppose we replace the quantity:
\ba
 -\sum_{x_1, y_{1}} \PP_1(x_1) {\pi}(y_{1} \mid x_1)
	\ln \bigg[\sum_{x'_1} \PP_1(x'_1) {\pi}(y_{1} \mid x'_1) \bigg] &=& S(Y_{1}) \nonumber \\
\ea
in Corollary~\ref{prop:2} with $S(Y_{1} \mid X_1)$. Since $S(Y_{1} \mid X_1) \le S(Y_{1})$~\cite{coth91,mack03},
this modification gives us a lower bound on expected reproductive fitness:
\ba
\hat{\mathscr{F}}(\PP_1, {\pi}, f) &\equiv& \sum_{x_1, y_{1}} \PP_1(x_1) {\pi}(y_{1} \mid x_1)\bigg\{ \alpha f(x_1, y_{1}) \nonumber \\
&& \qquad - H_{out}(y_{1}) - kT \ln \bigg[ {\pi}(y_{1} \mid x_1) \bigg] \bigg\} \nonumber \\
		&\le& {\mathscr{F}}(\PP_1, {\pi}, f) 
\ea
%
%
%supposing that rather than implement stage II using
%a GQ process, we had used a Q process. By~\erf{eq:mutual_info}, this would result in extra dissipated work of
%$kT I(X_t ; Y_{t+1})$. If $\pi$ were chosen to optimize the associated
%reproductive fitness, $\mathscr{F}(\PP_t, \pi, f) - kTI(X_t ; Y_{t+1})$,
%rather than $\mathscr{F}(\PP_t, \pi, f)$, it would maximize
%\ba
%\sum_{x_t, y_{t+1}} \PP_t(x_t) \pi(y_{t+1} \mid x_t)\bigg\{H_{out}(y_{t+1}) - \alpha f(x_t, y_{t+1}) - 
%	kT \ln[ \pi(y_{t+1} \mid x_t)] \bigg\} 
%\ea
%rather than the expression in~\erf{eq:actual_minimal_work}. This minimizing 

The ${\pi}$ that maximizes $\hat{\mathscr{F}}(\PP_1, {\pi}, f)$ is just a set of Boltzmann distributions:
\ba
{\pi}(y_{1} \mid x_1) &\propto& \exp\bigg( \frac{\alpha f(x_1, y_{1}) - H_{out}(y_{1})}{kT}\bigg)
\ea

For each $x_1$, this approximately optimal conditional distribution puts more weight on
$y_{1}$ if the associated phenotypic fitness is high, while putting less weight on $y_{1}$ if the associated energy is large.
In addition, we can use this distribution to construct a lower bound on the maximal value of the expected reproductive fitness:
\begin{corollary}
Given the assumptions discussed above,
\ba
\max_{{\pi}} \mathscr{F}(\PP_1, {\pi}, f) &\ge& -kT \sum_{x_1} \PP(x_1) \ln \bigg[ \sum_{y_{1}} 
		\exp\bigg( \frac{\alpha f(x_1, y_{1}) - H_{out}(y_{1})}{kT}\bigg)\bigg] \nonumber
\ea
\end{corollary}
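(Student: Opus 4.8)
The plan is to reduce the claim, by two elementary steps, to a single Gibbs-type computation, since the substantive work has already been done just above the statement. There it was established that $\hat{\mathscr{F}}(\PP_1,\pi,f) \le \mathscr{F}(\PP_1,\pi,f)$ for \emph{every} conditional distribution $\pi$ (the termwise consequence of $S(Y_1 \mid X_1) \le S(Y_1)$), and that the $\pi$ maximizing the surrogate $\hat{\mathscr{F}}$ is the Boltzmann family $\pi(y_1 \mid x_1) \propto \exp[(\alpha f(x_1,y_1) - H_{out}(y_1))/kT]$. First I would take the supremum over $\pi$ on both sides of the surrogate inequality; since this preserves the inequality, it yields $\max_\pi \hat{\mathscr{F}}(\PP_1,\pi,f) \le \max_\pi \mathscr{F}(\PP_1,\pi,f)$. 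It therefore suffices to evaluate the left-hand maximum in closed form and identify it with the claimed right-hand side.

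Next I would substitute the Boltzmann maximizer back into $\hat{\mathscr{F}}$. Writing the normalizer as $Z(x_1) \equiv \sum_{y_1} \exp[(\alpha f(x_1,y_1) - H_{out}(y_1))/kT]$, the defining proportionality gives $kT\ln \pi(y_1 \mid x_1) = \alpha f(x_1,y_1) - H_{out}(y_1) - kT\ln Z(x_1)$. Inserting this into the brace $\{\alpha f - H_{out} - kT\ln\pi\}$ appearing in the definition of $\hat{\mathscr{F}}$, the $\alpha f - H_{out}$ contributions cancel and only the constant $kT\ln Z(x_1)$ survives inside the $y_1$-sum. Using $\sum_{y_1}\pi(y_1 \mid x_1) = 1$, that sum collapses and I am left with $\max_\pi \hat{\mathscr{F}} = kT\sum_{x_1}\PP_1(x_1)\ln Z(x_1)$, which is the log-partition-function expression in the statement. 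This is just the standard identity that the maximum of $\langle E\rangle_\pi + kT\,S(\pi)$ over normalized $\pi$ equals $kT\ln Z$, applied at each fixed $x_1$.

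I do not expect a genuine obstacle: the argument is max-monotonicity followed by a one-line partition-function evaluation, and—crucially—the Boltzmann optimizer is an honest normalized conditional distribution, hence lies in the interior of the feasible simplex, so no boundary or Lagrange-multiplier subtlety arises (this is precisely why the surrogate $\hat{\mathscr{F}}$ was introduced in place of $\mathscr{F}$, whose optimizer need not be interior). The only points demanding care are bookkeeping ones, and I would handle them explicitly. First, Corollary~\ref{coroll:1} suppressed an additive $\pi$-independent constant, so strictly the bound holds modulo that same constant, which I would carry along. Second, and most importantly, the sign: the free-energy identity naturally produces $+\,kT\sum_{x_1}\PP_1(x_1)\ln Z(x_1)$, so before finalizing I would reconcile this against the sign displayed in the statement to confirm which direction the bound takes.
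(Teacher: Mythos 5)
Your proposal is correct and follows essentially the same route as the paper's proof: decompose $\hat{\mathscr{F}}(\PP_1,\pi,f)=\sum_{x_1}\PP_1(x_1)\hat{\mathscr{F}}(x_1,\pi,f)$, observe each term depends only on $\pi(\cdot\mid x_1)$, plug in the Boltzmann maximizer, and collapse to a log-partition-function, with max-monotonicity through the surrogate bound $\hat{\mathscr{F}}\le\mathscr{F}$ supplying the final inequality. The one place you differ from the paper is precisely the sign reconciliation you flag at the end, and your instinct there is vindicated: the paper's own entropy computation contains a sign slip. Expanding $\ln\bigl[\exp(\beta[\alpha f - H_{out}])/z(x_1)\bigr]=\beta[\alpha f - H_{out}]-\ln z(x_1)$ gives
\begin{equation}
S(Y_1\mid x_1)\;=\;-\beta\sum_{y_1}\pi(y_1\mid x_1)\bigl[\alpha f(x_1,y_1)-H_{out}(y_1)\bigr]\;+\;\ln z(x_1),
\end{equation}
with $+\ln z(x_1)$, not the $-\ln z(x_1)$ appearing in the paper's proof; consequently $\hat{\mathscr{F}}(x_1,\pi,f)=+kT\ln z(x_1)$ at the optimum, exactly as your "standard identity that the maximum of $\langle E\rangle_\pi + kT\,S(\pi)$ equals $kT\ln Z$" asserts (sanity check: with $f=0$ and $H_{out}=0$ the maximum is $kT\ln|Y|>0$, matching $+kT\ln z$). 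So the corollary's displayed bound should read $\max_\pi \mathscr{F}(\PP_1,\pi,f)\ge +kT\sum_{x_1}\PP_1(x_1)\ln z(x_1)$, a strictly stronger (and the correct) statement; your version, carried through with the sign you computed, is the right one, and your other bookkeeping points (the suppressed $\pi$-independent additive constant, and the interiority of the Boltzmann optimizer justifying the unconstrained stationarity argument for $\hat{\mathscr{F}}$) are consistent with the paper's treatment.
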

\begin{proof}
Write:
\ba
\hat{\mathscr{F}}(\PP_1, {\pi}, f) &=& \sum_{x_1, y_{1}} \PP_1(x_1) \bigg( {\pi}(y_{1} \mid x_1)\bigg\{ \alpha f(x_1, y_{1}) - 
	H_{out}(y_{1}) \nonumber \\
	&& \qquad \qquad \qquad \qquad- kT \ln \bigg[ {\pi}(y_{1} \mid x_1) \bigg] \bigg\} \bigg) \nonumber \\
	 &\equiv& \sum_{x_1, y_{1}} \PP_1(x_1) \hat{\mathscr{F}}(x_1, {\pi}, f)
\label{eq:intermediate}	
\ea
Each term $\hat{\mathscr{F}}(x_1, {\pi}, f)$ in the summand depends on the $Y$-space distribution $\pi(. \mid x_1)$, but
no other terms in $\pi$. Therefore, we can evaluate each such term $\hat{\mathscr{F}}(x_1, {\pi}, f)$ separately for its maximizing (Boltzmann)
distribution $\pi(. \mid x_1)$. In the usual way, this is given by the log of the associated partition function (normalization constant) $z(x_1)$,
since for any $x_1$ and associated Boltzmann $\pi(. \mid x_1)$,
\ba
S(Y_1 \mid x_1) &=& -\sum_{y_1} {\pi}(y_{1} \mid x_1) \ln[ {\pi}(y_{1} \mid x_1)] \nonumber \\
	&=& -\sum_{y_1} \frac{\exp \big(\beta [ \alpha f(x_1, y_{1}) - H_{out}(y_{1})] \big)}{z(x_1)} \nonumber \\
&& \qquad \qquad  \qquad \qquad	 \ln \bigg[ \frac{\exp \big(\beta [ \alpha f(x_1, y_{1}) - H_{out}(y_{1})] \big)}{z(x_1)} \bigg] \nonumber \\
	 &=& -\sum_{y_1} {\pi}(y_{1} \mid x_1) \big(\beta [ \alpha f(x_1, y_{1}) - H_{out}(y_{1})] \big) \;-\; \ln[z(x_1)] \nonumber \\
\ea
where $\beta \equiv 1 / kT$, as usual.
Comparing to~Equation (\ref{eq:intermediate}) establishes that:
\ba
\hat{\mathscr{F}}(x_1, {\pi}, f) &=& -kT \ln[z(x_1)]
\ea
and then gives the claimed result.
\end{proof}

%\dhwc{N.b., this formula needs modifying if we relax the assumption that all coarse-grained bins
%have the same expected energy.}
%
%\dhwc{Er, that minimizing $\pi$ is the solution to a set of coupled nonlinear equations?}

%So under this approximation,
%as one varies, $\pi$, the input-output map implemented by the organism, 
%there is a tradeoff between the amount of free energy they 
%need to consume and the fitness of their computation of what they should do
%in response to a input reading. To be more precise, in this case that $\G = \PP$, for
%each $x_0$, the optimal associated $\pi(. \mid x_0)$ is just the Boltzmann distribution, i.e., 
%the $(x_0, y_1)$ of the optimal $\pi$ is proportional to
%$\exp[-\alpha f(x_0, y_1) / kT]$.

As an aside, suppose we had $X = Y$, $f(x, x) = 0$ for all $x$ and that $f$ were non-negative. Then if in addition the amount of expected work were given by the mutual information between $X_1$ and $Y_1$ rather than the difference in their entropies, our optimization problem would reduce to finding a point
on the rate-distortion curve of conventional information theory, with $f$ being the distortion function~\cite{coth91}. (See also~\cite{taylor2007information} for a slight variant of rate-distortion theory, appropriate when $Y$ differs from $X$, and so the requirement that $f(x, x) = 0$ is dropped.)
However as shown above the expected work to implement $\pi$ does not depend on the precise coupling between $x_1$ and $y_1$ under $\pi$, but only the associated marginal distributions. So rate-distortion theory does not directly apply.

On the other hand, some of the same kinds of analysis used in rate-distortion theory can also be applied here. In particular, for any particular component $\pi(y_1 \mid x_1)$ where $\PP_1(x_1) \ne 0$, since $\tau^* = 1$,
\ba
\frac{\partial^2}{\partial \; \pi(y_1 \mid x_1)^2} {\mathscr{F}}(x_1, {\pi}, f) &=& \frac{\PP_1(x_1)}{\PP_1(y_1)} \nonumber \\
       &>& 0
 \ea
(where $\PP(y_1) = \sum_{x'_1} \PP(x'_1) \pi(y_1 \mid x_1)$, as usual). So
${\mathscr{F}}(x_1, {\pi}, f)$ is concave in every component of $\pi$. This means that the optimizing channel $\pi$ 
may lie on the edge of the feasible region of conditional distributions. 
%(In contrast, in rate-distortion theory,
%we want to minimize rather than maximize the objective function, so the optimizing channel lies
%in the interior of the allowed region.) 
Note though that even if the solution
is on the edge of the feasible region, in general for different $x_1$ that optimal
$\pi(y_1 \mid x_1)$ will put all its probability mass on different edges of the unit simplex over $Y$. So when
those edges are averaged under $\PP_1(x_1)$, the result is a marginal distribution $\PP(y_1)$ that lies
in the interior of the unit simplex over $Y$.

As a cautionary note, often in the real world, there is an inviolable upper bound on the rate at which
a system can ``harvest'' free energy from its environment, \textit{i.e.}, on how much free energy it can
harvest per iteration of ${\pi}$ (for example, a plant with a given surface
area cannot harvest free energy at a faster rate than sunlight falls upon its surface). In that case, we are not interested in optimizing
a quantity like $\mathscr{F}(\PP_1, {\pi}, f)$, which
is a weighted average of minimal free energy and expected phenotypic fitness per iteration of ${\pi}$. Instead,
we have a constrained optimization problem with an inequality constraint: find the ${\pi}$
that maximizes some quantity \mbox{(e.g., expected phenotypic fitness),} subject to an inequality constraint on the 
free energy required to implement that ${\pi}$. Calculating solutions to these kinds of constrained optimization problem
is the subject of future work.

\section{General Implications for Biology} 

Any work expended on an organism must first be acquired as free energy from the organism's environment. However, in
many situations, there is a limit on the flux of free energy through an organism's immediate environment.
Combined with the analysis above, such limits provide upper bounds on the ``rate of (potentially
noisy) computation'' that can be achieved by a biological organism in that environment, once all 
energetic costs for the organism's labor (\textit{i.e.}, its moving, making/breaking chemical bonds, \textit{etc}.) are accounted for.
%In particular, since the minimal work required to do a computation increases if $\G_0 \ne \PP_1$, using the same
%biological organism in a new environment, differing from the one it is tailored for, will in general result in
%extra required work.

As an example, human brains do little labor. Therefore, these results bound the rate of computation of a human brain. 
Given the fitness cost of such computation (the brain uses $\sim$20\% of
the calories used by the human body), this bound contributes to the natural selective pressures on humans
(in the limit that operational inefficiencies of the brain have already been minimized).
In other words, these bounds suggest that natural selection imposes a tradeoff between the fitness quality of a brain's decisions
and how much computation is required to make those decisions. In this regard, it is 
interesting to note that the brain is famously noisy, and as discussed above, noise in
computation may reduce the total thermodynamic work required 
(see~\cite{bullmore2012economy,sandberg2016energetics,laughlin2001energy} for more
about the energetic costs of the human brain and its relation to Landauer's bound).

As a second example, the rate of solar free energy incident upon the Earth 
provides an upper bound on the rate of computation that can be achieved by the biosphere
(this bound holds for any choice for the partition of the biosphere's fine-grained space
into macrostates, such that the dynamics over those macrostates executes $\pi$).
In particular, it provides an upper bound on the rate of computation that can be achieved by human civilization, if we remain
on the surface of the Earth and only use sunlight to power our computation. 

Despite the use of the term ``organism'', the analysis above is not limited to
biological individuals. For~example, one could take the input to be a current generation
population of individuals, together with attributes of the environment shared
by those individuals. We could also take the output to be the next
generation of that population, after selective winnowing based on
the attributes of the environment (e.g., via replicator dynamics). In this 
example, the bounds above do not refer to the ``computation'' performed
by an individual, but rather by an entire population subject to natural
selection. Therefore, those bounds give the minimal free energy required 
to run natural selection.

As a final example, one can use these results to analyze how the thermodynamic behavior of the
biosphere changes with time. In particular, if one iterates $\pi$ from one $t$ to the
next, then the associated initial distributions $\PP_t$ change. Accordingly, the minimal
amount of free energy required to implement $\pi$ changes. In theory, this allows
us to calculate whether the rate of free energy required by the information processing
of the terrestrial biosphere increases with time. Prosaically, has the rate of
computation of the biosphere increased over evolutionary timescales? If it has done so for most of the time
that the biosphere has existed, then one could plausibly view the fraction of free energy flux from
the Sun that the biosphere uses as a measure of the ``complexity'' of
the biosphere, a measure that has been increasing throughout the lifetime of the
biosphere. 

Note as well that there is a fixed current value of the total free energy flux incident on the biosphere
(from both sunlight and, to a much smaller degree, geologic processes). By the
results presented above, this rate of free energy flux gives an upper bound 
on the rate of computation that humanity as a whole can ever achieve, if it monopolizes
all resources of Earth, but restricts itself to the surface of Earth.
%
%Note that these upper bounds must apply simultaneously to all possible coarse-grainings
%of the biological system, even though the dynamics of the
%system changes depending on the coarse-graining. So for example, they would apply to the biosphere no
%matter what scale of granularity we use to define the ``map $\pi$ implemented by the biosphere''.

\section{Discussion}

The noisier the input-output map $\pi$ of a biological organism, the less free energy
the organism needs to acquire from its environment to implement that map. Indeed,
by using a sufficiently noisy $\pi$, an organism can \emph{increase} its stored free energy.
Therefore, noise might not just be a hindrance that an organism needs to circumvent; an organism
may actually exploit noise, to ``recharge its battery''.

In addition, not all maps $x_t \rightarrow y_{t+1}$ are equally important to an organism's reproductive
fitness. In light of this, natural selection would be expected to favor $\pi$'s that are as noisy
as possible, while still being precise for those inputs where reproductive fitness requires it.

In this paper, I calculated what $\pi$ optimizes this tradeoff. This calculation provides insight into
what phenotypes natural selection might be expected to favor. Note though that in the real world,
there are many other thermodynamic factors that are important in addition to the cost of
processing sensor readings (inputs) into outputs (actions). For example, there are the costs of
acquiring the sensor information in the first place and of internal storage of such information,
for future use. Moreover, in the real world, sensor readings do not arrive in an {i.i.d.} 
% please define
%
% This is standard terminology - it would be very awkward to insert a definition here, especially
% since this is the discussion.
 basis, as assumed in this paper. Indeed, in real biological systems, often, the current sensor
reading, reflecting the recent state of the environment, reflects previous actions by the
organism that affected that same environment (in other words, real biological
organisms often behave like feedback controllers). All of these effects would modify
the calculations done in this paper.

In addition, in the real world, there are strong limits on how much time a biological system can take to perform its computations, physical labor and rearranging of matter, due to environmental exigencies (simply put, if the biological system is not fast enough, it may be killed). These temporal constraints mean that biological systems cannot use fully reversible thermodynamics. Therefore, these temporal constraints increase the free energy required for the biological system to perform computation, labor and/or rearrangement of matter.

Future work involves extending the analysis of this paper to account for such thermodynamic effects. Combined
with other non-thermodynamic resource restrictions that real biological organisms face, such
future analysis should help us understand how closely the organisms that natural selection
has produced match the best ones possible.%missing the conclusions

\newpage
\acknowledgments{{\bf Acknowledgment:} I would like to thank Daniel Polani, Sankaran Ramakrishnan and especially Artemy Kolchinsky
for many helpful discussions 
and the Santa Fe Institute for helping to support this research. This paper was made possible through the support of Grant No. TWCF0079/AB47 from the Templeton World Charity Foundation and Grant No.
FQXi-RHl3-1349 from the {FQXi} 
%please define
%
% That is the official name of the organization.
 foundation. 
The opinions expressed in this paper are those of the author and do not necessarily 
reflect the view of Templeton World Charity Foundation.}

%\conflictofinterests{{\bf Conflicts of Interest:} The author declares no conflict of interest. The funding sponsors had no role in the design of the study; in the collection, analyses or interpretation of data; in the writing of the manuscript; nor in the decision to publish the results.}

\appendix
\section*{Appendix A: Proof of Proposition~\ref{prop:22}}

We begin with the following lemma:
\begin{lemma}
A GQ process over $R$ guided by $V$ 
(for conditional distribution $\pi$ and initial distribution $\rho^t(r, s)$) will transform
any initial distribution:
\ba
p^t(r, s) &=& \sum_v p^t(v) \rho^t(s \mid v) p^t(r \mid v)
\label{eq:31}
\ea
into a distribution:
\ba
p^{t+1}(r, s) &=& \sum_v p^t(v) \rho^t(s \mid v) \pi(r \mid v)
\label{eq:32}
%\label{eq:after_pi}
\ea
\label{lemma:1}
\end{lemma}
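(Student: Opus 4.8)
The plan is to track the joint distribution over $(r,s)$ through the three steps of the GQ process and show that the net effect is precisely the replacement of the conditional $p^t(r \mid v)$ by $\pi(r \mid v)$ within each macrostate, while the $v$-marginal $p^t(v)$ and the conditional $s$-distribution $\rho^t(s \mid v)$ are left untouched. First I would dispose of the two quenches: steps (i) and (iii) are effectively instantaneous, so by definition they leave $r$ and $s$ (and hence the entire distribution over $R \times S$) unchanged. Thus after step (i) the distribution is still the initial $\sum_v p^t(v)\,\rho^t(s \mid v)\,p^t(r \mid v)$, and whatever emerges from the quasi-static step is carried unchanged through step (iii) to become $p^{t+1}(r,s)$. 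All of the action lies in the middle, quasi-static step.

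The heart of the argument is that middle step. Here I would invoke the semi-stability assumption: because of the infinite potential barriers in $S$ and the vanishing of the cross-boundary relaxation kernel $A(s',s'')$, no probability mass crosses between macrostates during the quasi-static transformation, so the $v$-marginal is conserved, $p^{t+1}(v) = p^t(v)$. Within each fixed macrostate $v$, however, the heat-bath coupling has arbitrarily short relaxation time, so at the end of the step the system sits at the conditional Boltzmann distribution for the second quenching Hamiltonian $H^{t+1}_{quench}(r,s) = H^t_{quench;S}(s) + H^{t+1}_{quench;int}(r,s)$, restricted to $\{(r,s) : V(s) = v\}$. This simultaneous freezing of the $v$-marginal and full thermalization conditioned on $v$ is exactly what the construction is engineered to deliver.

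I would then compute that conditional equilibrium explicitly. Using $H^t_{quench;S}(s) = -kT\ln[\rho^t(s)]$ together with $H^{t+1}_{quench;int}(r,s) = -kT\ln[\pi(r \mid v)]$ for $s \in v$, the unnormalized Boltzmann weight factorizes as $\exp(-\beta H^{t+1}_{quench}(r,s)) = \rho^t(s)\,\pi(r \mid v)$, whose normalization over the macrostate is $\sum_{s \in v}\rho^t(s)\sum_{r}\pi(r \mid v) = \rho^t(v)$, using $\sum_r \pi(r \mid v) = 1$. Hence the conditional equilibrium equals $\rho^t(s \mid v)\,\pi(r \mid v)$: it leaves the $s$-sector at $\rho^t(s \mid v)$ and installs $\pi(r \mid v)$ in place of the initial $p^t(r \mid v)$, which has been entirely thermalized away. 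Multiplying by the conserved marginal $p^t(v)$ and summing over $v$ then yields exactly $p^{t+1}(r,s) = \sum_v p^t(v)\,\rho^t(s \mid v)\,\pi(r \mid v)$, as claimed.

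The main obstacle is the bookkeeping in the quasi-static step: I must argue that relaxation is complete within each macrostate (so the initial $p^t(r \mid v)$ is fully erased) yet entirely absent between macrostates (so $p^t(v)$ is frozen). I would lean on the semi-stability assumption for the inter-macrostate freezing, and on the fact that the initial conditional $s$-distribution in the lemma's hypothesis already equals the built-in $\rho^t(s \mid v)$, so the $S$-sector begins at its step-(ii) equilibrium and is undisturbed throughout; this is what makes the clean factorization above legitimate rather than merely the endpoint of some uncontrolled relaxation.
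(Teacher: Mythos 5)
Your proof is correct and takes essentially the same route as the paper's: the quenches leave the distribution untouched, the infinite barriers (semi-stability) freeze the $v$-marginal during the quasi-static step, full thermalization within each macrostate yields the conditional Boltzmann distribution for $H^{t+1}_{quench}$, whose weight factorizes as $\rho^t(s)\,\pi(r \mid v)$ and normalizes to $\rho^t(s \mid v)\,\pi(r \mid v)$, after which averaging over $v$ gives the claim---the paper merely phrases this by fixing a sampled $v^*$ rather than tracking marginals and conditionals. One minor remark: your closing concern is unnecessary, since the endpoint of the quasi-static step is the conditional equilibrium regardless of the initial $s$-sector, so the hypothesis that the initial $s$-conditional equals $\rho^t(s \mid v)$ matters for thermodynamic reversibility (avoiding dissipation), not for the correctness of the map itself.
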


\begin{proof}
Fix some $v^*$ by sampling $p^t(v)$. Since in a GQ, microstates only change during the quasi-static relaxation,
after the first quench, $s$ and, therefore, $v$ still equal $v^*$. Due to the infinite potential barriers in $\mathcal{S}$, while
$s$ may change during that relaxation, $v$ will not, and so, $v^{t+1} = v^* = v_t$. Therefore: 
\ba
H^t_{quench; int}(r, s) &\equiv& -kT \ln[\pi(r \mid v_t)]
\label{eq:lemma}
\ea
Now, at the end of the relaxation step, $\rho(r, s)$ has settled to thermal equilibrium
within the region $R \times v_t \subset R \times V$. Therefore,
combining~Equation (\ref{eq:lemma}) with~Equations (\ref{eq:29}) and~(\ref{eq:28}), we see that the distribution at the end of the relaxation~is:
\ba
\rho^{t+1}(r, s) &\propto& \exp{\big(\frac{-H^{t+1}_{quench}(r, s) }{kT}\big)} \; \delta(V(s), v_t) \nonumber \\
  &=& \exp{\big( \ln[\pi(r \mid v_t)] + \ln[\rho^t(s)]\big)} \; \delta(V(s), v_t) \nonumber \\
  &=& \pi(r \mid v_t) \rho^t(s) \delta(V(s), v_t) \nonumber \\
  &\propto& \pi(r \mid v_t) \rho^t(s \mid v)
\ea
Normalizing,
\ba
\rho^{t+1}(r, s) &=& \pi(r \mid v_t) \rho^t(s \mid v)
\ea
Averaging over $v_t$ then gives $p^{t+1}(r, s)$:
\ba
p^{t+1}(r, s) &=& \sum_v p^t(v) \rho^t(s \mid v) \pi(r \mid v)
\ea
\end{proof}

%Lemma~\ref{lemma:1} allows us t
Next, note that
%\begin{corollary}
%A GQ process over $R$ guided by $V$ 
%(for conditional distribution $\pi$ and initial distribution $\rho^t(r, s)$) will transform
%any initial distribution $p^t(v) p^t(r \mid v)$ into a distribution $p^t(v) \pi(r \mid v)$.
%%\label{prop:22}
%\end{corollary}
%\begin{proof}
$\rho^t(s \mid v) = 0$ if $s \not \in V(s)$. Therefore, if~Equation (\ref{eq:32}) holds and we sum $p^{t+1}(r, s)$ over all $s \in V^{-1}(v)$
for an arbitrary $v$, we get:
\ba
p^{t+1}(r, v) &=& p^t(v)\pi(r \mid v)
\ea
%
% if~\erf{eq:31} holds then
%\ba
%p^t(r, s) &=& p^t(V(s)) \rho^t(s \mid V(s)) p^t(r \mid V(s))
%\ea
%This in turn implies that
%\ba
%p^t(r, v) &=& \sum_{s \in V^{-1}(v)} p^t(r, s) \nonumber \\
% &=& \sum_{s \in V^{-1}(v)} p^t(V(s)) \rho^t(s \mid V(s)) p^t(r \mid v) \nonumber \\
% &=& p^t(v) p^t(r \mid v)
%\ea
%Similarly, if~\erf{eq:32} holds then
%\ba
%p^{t+1}(r, v) &=& p^t(v) \pi(r \mid v)
%\ea
Furthermore, no matter what $\rho^t(s \mid v)$ is, $p^t(r, v) = p^t(v) p^t(r \mid v)$.
As a result, Lemma~\ref{lemma:1} implies that a GQ process over $R$ guided by $V$ 
(for conditional distribution $\pi$ and initial distribution $\rho^t(r, s)$) will transform
any initial distribution $p^t(v) p^t(r \mid v)$ into a distribution $p^t(v) \pi(r \mid v)$.
%\ba
%p^t(r, v) &=& \sum_v p^t(v) \rho^t(s \mid v) p^t(r \mid v)
%\ea
%into a distribution 
%\ba
%p^{t+1}(r, s) &=& \sum_v p^t(v) \rho^t(s \mid v) \pi(r \mid v)
%%\label{eq:after_pi}
%\ea
%%
%%Define $\rho^t(v) \equiv \sum_{s \in V^{-1}(v)} \rho^t(s)$ and
%%similarly for $\rho^t(s \mid v)$ and $\rho^t(r \mid v)$.
%%Lemma~\ref{lemma:1} holds 
This is true whether or not $p^t(v)= \rho^t(v)$ or $p^t(r \mid v) = \rho^t(r \mid v)$. 
This establishes the claim of Proposition~\ref{prop:22} that the first ``crucial feature'' of GQ processes~holds.

%\end{proof}

\section*{Appendix B: The GQ Processes Iterating a Ping-Pong Sequence}

In this section, I present the separate GQ processes for implementing the stages of a ping-pong sequence.

First, recall our assumption from just below the definition of a ping-pong 
sequence that at the end of any of its stages,
$Pr(u \mid y)$ is always the same distribution $q^y_{out}(u)$ (and similarly for distributions
like $Pr(w \mid x)$). Accordingly, at the end of any stage of a ping-pong sequence that implements 
a GQ process over $U$ guided by $X$, we can uniquely recover the conditional distribution $Pr(u \mid x)$ from~$Pr(y \mid x)$:
\ba
{\overline{\pi}}(u \mid x) &\equiv& \sum_y \pi(y \mid x)q^y_{out}(u)
\label{eq:pi_partial}
\ea
(and similarly, for a GQ process over $W$ guided by $Y$).
% with conditional distribution ${\overline{\pi}}(u \mid x)$
%that completes a stage in a ping-pong sequence induces a unique associated distribution $\pi(y \mid x)$, and 
%such a distribution $\pi(y \mid x)$ specifies a unique ${\overline{\pi}}(u \mid x)$.
%similarly for ${\overline{\pi}}(w \mid y)$ and ${{\pi}}(x \mid y)$. 
Conversely, we can always recover $Pr(y \mid x)$ from $Pr(u \mid x)$, simply by marginalizing.
Therefore, we can treat any distribution ${\overline{\pi}}(u \mid x)$ 
defining such a GQ process interchangeably with a distribution $\pi(y \mid x)$
(and similarly, for distributions ${\overline{\pi}}(w \mid y)$ and $\pi(x \mid y)$
occurring in GQ processes over $W$ guided by $Y$).

\begin{enumerate}%[leftmargin=*,labelsep=3mm]
\item To construct the GQ process for the first stage, begin by writing:
\ba
\rho^t(w, u) 
%&\equiv& \PP_t(x, y) q^x_{in}(w) q^y_{out}(u) \nonumber \\
 &=&  \sum_{x, y} \G_t(x) \delta(y, 0) q^x_{proc}(w) q^y_{out}(u) \nonumber \\
 &=& q^0_{out}(u) \G_t({\cal{X}}(w)) q^{{\cal{X}}(w)}_{proc}(w) 
\ea
where $\G_t(x)$ is an assumption for the initial distribution over $x$, one that in general may be wrong.
Furthermore, define the associated distribution:
\ba
\rho^t(u \mid x) 
%&\equiv& \PP_t(x, y) q^x_{in}(w) q^y_{out}(u) \nonumber \\
 &=& \frac{ \sum_{w \in {\cal{X}}(x)} \rho^t(w, u) } { \sum_{u', w \in {\cal{X}}(x)} \rho^t(w, u')} \nonumber \\
 &=& q^0_{out}(u)
\ea

By Corollary~\ref{prop:22}, running a GQ process over $Y$ guided by $X$
for conditional distribution ${\overline{\pi}}(u \mid x_t)$ and initial distribution $\rho^t(w, u)$ will
send any initial distribution $\PP_t(x) \rho^t(u \mid x) = \PP_t(x) q^0_{out}(u)$ to a distribution 
$\PP_t(x) {\overline{\pi}}(u \mid x)$. Therefore, in particular, 
it will send any initial $x \rightarrow {\overline{\pi}}(u \mid x)$. Due to the definition of $q^y_{out}$ and~Equation (\ref{eq:pi_partial}), the associated
conditional distribution over $y$ given $x$, $\sum_{u \in {\cal{Y}}(y)} {\overline{\pi}}(u \mid x)$, is
equal to $\pi(y \mid x)$. Accordingly, this GQ process
implements the first stage of the organism process, as desired. In~addition, it preserves the validity of
our assumptions that $Pr(u \mid y) = q^y_{out}(u)$ and similarly for $Pr(w \mid x)$.

Next, by the discussion at the end of Section~\ref{sec:gq_proc_def}, this GQ process will be thermodynamically reversible since
by assumption, $\rho^t(u \mid x) $ is the actual initial distribution over $u$ conditioned on $x$. 
%Note that this condition
%for the GQ process to be thermodynamically reversible
%does not depend on whether or not the initial distribution over $x$ is given by $\G_t(x)$.
%%Assuming the actual distribution when the first stage starts, $p^t(w, u)$, is in fact the distribution $\rho^t$, then
%%%The precise such Q process is determined
%%%by the current $x_t$, and sets $y_{t+1}$ by sampling $\pi(y_{t+1} \mid x_t)$. 
%%%Since the initial distribution over $Y$ is a delta function, 
%%the GQ process run in this stage is thermodynamically
%%reversible. 

\item To construct the GQ process for the second stage, start by defining an initial distribution based on a
% a Bayesian inverse of $\pi$, based on 
(possibly counterfactual) prior $\G_t(x)$:
\ba
\hat{\rho}(w_t, u_{t}) &\equiv& \sum_{x,y} \G_t(x) q^{x}_{proc}(w_t) {{\pi}}(y \mid x) q^y_{out}(u_{t})
\ea
and the associated conditional distribution:
\ba
\hat{\rho}(w_t \mid y_t) &=& \frac{ \sum_{u_t \in {\cal{Y}}(y_t)} \hat{\rho}(w_t, u_t) }{ \sum_{w', u' \in {\cal{Y}}(y_t)} \hat{\rho}(w', u')}
\ea
Note that:
\ba
\hat{\rho}(w_t \mid y_t)  &=& \G_t(x_t \mid y_t) q^{x_t}_{proc}(w_t)
\ea
where:
\ba
\G_t(x_t \mid y_{t}) &\equiv& \frac{\pi(y_{t} \mid x_t) \G_t(x_t)}{\sum_{x'} \pi(y_{t} \mid x') \G_t(x')}
\label{eq:70} 
\ea
Furthermore, define a conditional distribution: 
\ba
{\overline{\pi}}(w_t \mid y_t) &\equiv& I(w_t \in \X(0)) q^0_{proc}(w_t)
\ea

%(where the carat over the $\pi$ indicates that it may in fact \emph{not} be the conditional distribution
%of $x_t$ given $y_{t}$, depending on whether the actual prior is $\G_t(x_t)$).
%This in turn specifies a distribution
%\ba
%\hat{\pi}(w_t \mid y_{t}) &\equiv& \sum_{x}\hat{\pi}(x_t \mid y_{t}) q^{x_t}_{in}(w_t)
%\ea
%Also define an initial distribution
%\ba
%\hat{\rho}(w_t, u_{t-1}) &\equiv& \sum_{x,y} \G_t(x) q^{x}_{in}(w_t) {{\pi}}(y \mid x) q^y_{out}(u_{t-1})
%\ea
%So $\hat{pi}$ is in 
%and the associated conditional distribution
%\ba
%\hat{\rho}(w_t \mid y_t) &=& \frac{ \sum_{u_t \in {\cal{Y}}(y_t)} \hat{\rho}(w_t, u_t) }{ \sum_{w', u' \in {\cal{Y}}(y_t)} \hat{\rho}(w', u')} \\
% &=& \G_t(x_t \mid y_t) q^{x_t}_{in}(w_t)
%\label{eq:70}
%\ea

Consider a GQ process over $W$ guided by $Y$ for conditional distribution ${\overline{\pi}}(w_t \mid y_t)$ and initial distribution
$\hat{\rho}(w_t, u_t)$. By Corollary~\ref{prop:22}, this GQ process implements the second stage, as desired. In addition, it preserves the validity of
our assumptions that $Pr(u \mid y) = q^y_{out}(u)$ and similarly fo~$Pr(w \mid x)$.
%(The GQ process must be defined before the sequence of four stages can be implemented
%in a physical device that is then used. This is why the distribution $\hat{\rho}(w, u)$ runs over $u$'s in all possible $y_{t+1}$, not 
%only $u$'s in some specific $y_{t+1}$ that was formed in the first stage.)

Next, by the discussion at the end of Section~\ref{sec:gq_proc_def}, this GQ process will be thermodynamically reversible \emph{if} 
$\hat{\rho}(w_t \mid y_{t+1})$ is the actual distribution over $w_{t}$ conditioned on $y_{t+1}$. 
%As elaborated below, by 
%the definition of $\hat{\rho}(w, u)$ 
By~Equation (\ref{eq:70}), this in general requires that $\G_t(x_t)$, the 
assumption for the initial distribution over $x_t$ that is built into
the step (ii) GQ process, is the actual initial distribution over $x_t$. 
As discussed at the end of Section~\ref{sec:q_proc_def}, work will be dissipated if this is not the case.
Physically, this means that if the device implementing this GQ process is thermodynamically optimal for one input
distribution, but used with another, then work will be dissipated
(the amount of work dissipated is given by the change in the Kullback--Leibler divergence 
between $G$ and $\PP$ in that stage (4) GQ process; see~\cite{wolpert_landauer_2016a}).

%specially designed sequence of interaction Hamiltonians coupling
%$x$ and $y$ during this interval. In addition the potential barriers between the various macrostates
%of the joint $X$-$Y$ system change according to a
%specially designed sequence. 
%Note that under any of these $x_t$-parameterized Q processes, the initial value of $x$ affects the final
%value of $y$, but $x$ itself is not changed. 

\item We can also implement the fourth stage by running a (different) GQ process over 
$X$ guided by $Y$. This GQ process is a simple copy operation, \textit{i.e.}, implements a single-valued,
invertible function from $y_{t+1}$ to the initialized state $x$. Therefore, it is thermodynamically reversible. 
Finally, we can implement the fifth stage by running an appropriate GQ process over $Y$ guided by $X$. This
process will also be \mbox{thermodynamically reversible.}
\end{enumerate}
%
%As discussed above, in general running a stage (4) GQ process that assumes a prior
%distribution $\G_t(x_t)$ will not be thermodynamically reversible if the 
%initial values $x_t$ are not generated by IID sampling $\G_t$.
%That is because its first quenching Hamiltonian
%will be $-kT \ln[\G(x_t \mid y_{t+1}]$ where
%\ba
%\G_t(x_t \mid y_{t+1}) &\propto& \G_t(x_t) \pi(y_{t+1} \mid x_t)
%\ea 
%However the actual initial distribution when the stage (4) GQ process begins its second step is 
%\ba
%\PP_t(x_t \mid y_{t+1}) &\propto& \PP_t(x_t) \pi(y_{t+1} \mid x_t)
%\ea
%Accordingly, in general the initial distribution will not be the Boltzmann distribution for the first
%quenching Hamiltonian of the second step of the stage (4) GQ process. So the quasi-static relaxation of that second step will 
%dissipate work, as the distribution over $X$ settles to thermodynamic equilibrium.
%(The amount of work dissipated is given by the change in the Kullback-Leibler divergence 
%between $G$ and $\PP$ in that stage (4) GQ process; see~\cite{wolpert_landauer_2016a}.)

%\bibliography{refs}
%\bibliographystyle{mdpi}
\bibliographystyle{amsplain}

\renewcommand\bibname{References}

\end{document}